\definecolor{darkblue}{rgb}{0.0,0,0.7}
\newcommand{\darkblue}{\color{darkblue}}
\definecolor{darkred}{rgb}{0.68,0,0}
\definecolor{darkgreen}{rgb}{0,.38,0}
\newcommand{\defn}[1]{\emph{\darkblue #1}}
\newlength{\ml}
\newtheorem{thm}{Theorem}[section]
\newtheorem{cor}[thm]{Corollary}
\newtheorem{prop}[thm]{Proposition}
\newtheorem{conj}[thm]{Conjecture}
\newtheorem{ex}[thm]{Example}
\theoremstyle{plain}
\theoremstyle{definition}
\newtheorem{ques}{Question}
\newtheorem{rem}[thm]{Remark}
\numberwithin{equation}{section}
\def\bx{{\textbf{\textit{x}}}}
\def\by{{\textbf{\textit{y}}}}
\newcommand{\la}{\lambda}
\newcommand{\cP}{\mathcal P}
\newcommand{\nn}{\Bbb N}
\newcommand{\ga}{\gamma}
\newcommand{\al}{\alpha}
\newcommand{\be}{\beta}
\def\sgn{\mathrm{sgn}}
\def\.{\hskip.06cm}
\def\SSYT{ {\text {\rm SSYT}  } }
\def\P{{{\rm{\textsf{P}} }}}
\def\FP{{\rm{\textsf{FP}}}}
\def\SP{{{\rm{\textsf{\#P}}}}}
\def\GapP{{{\rm{\textsf{GapP}} }}}
\def\FP{{{}\rm{\textsf{FP}} }}
\def\NP{{{\rm{\textsf{NP}} }}}
\def\SBQP{{\rm{\textsf{\#BQP}}}}
\def\QMA{{\rm{\textsf{QMA}}}}
\newcommand{\poly}{poly}
\def\bbb{{\text{\bf b}}}
\def\bbj{{\text{\bf j}}}
\def\<{\langle}
\def\>{\rangle}
\def\det{\mathrm{det}}
\def\sgn{\mathrm{sgn}}
\def\la{\lambda}
\begin{document}


\title[Polynomial time algorithms for multiplicities]{Polynomial time classical versus quantum algorithms for representation theoretic multiplicities}

\author{Greta Panova}
\address{Department of Mathematics, University of Southern California, 
Los Angeles, CA 90089}
\email{gpanova@usc.edu}
\urladdr{https://sites.google.com/usc.edu/gpanova/}
\thanks{The author was partially supported  by the NSF and the Simons foundation.}

\date{\today}

\begin{abstract}
Littlewood--Richardson, Kronecker and plethysm coefficients are fundamental multiplicities of interest in Representation Theory and Algebraic Combinatorics. Determining a combinatorial interpretation for the Kronecker and plethysm coefficients is a major open problem, and prompts the consideration of their computational complexity. Recently it was shown that they behave relatively well with respect to quantum computation, and for some large families there are polynomial time quantum algorithms \cite{LH24} (also~\cite{bravyi2024quantum}). In this paper we show that for many of those cases the Kronecker and plethysm coefficients can also be computed in polynomial time via classical algorithms, thereby refuting some of the conjectures in \cite{LH24}. This vastly limits the  cases in which the desired super-polynomial quantum speedup could be achieved. 
\end{abstract}

\keywords{Kronecker coefficients, plethysm coefficients, Symmetric group representations, Computational Complexity, polynomial time algorithms}

\maketitle

\section{Introduction}

Some of the outstanding open problems in Algebraic Combinatorics concern finding ``combinatorial interpretations'' for certain representation-theoretic multiplicities and other structure constants which are naturally nonnegative integers. While ``combinatorial interpretation" is a loosely defined term generally assumed to mean ``counting some nice objects'', a more formal definition would go through computational complexity theory with the premise that such a nice positive formula normally implies that these counting problems are in $\SP$. In particular, showing no nice combinatorial interpretation exists could be done by showing that the problem is not in $\SP$ under standard computational complexity assumptions, see~\cite{Pak22,panova2023computational}. Note that all quantities in question are already  in the, conjecturally strictly larger, class $\GapP_{\geq 0}:=\{ f -g: f,g \in \SP, f-g \geq 0\}$, i.e. nonnegative functions which can be written as differences of two $\SP$ functions, and deciding their positivity is $\NP$-hard~\cite{FI20,IMW17}. In contrast with classical computation, some of these multiplicities are shown to belong to the $\SBQP$ class, the quantum analogue of $\SP$, and deciding positivity is in $\QMA$, that is, there exists a polynomial time quantum verifier for their positivity, see~\cite{bravyi2024quantum,CHW,ikenmeyer2023remark}. 

In~\cite{LH24}, following~\cite{bravyi2024quantum}, the authors exhibited efficient quantum algorithms for computing these multiplicities in certain cases (based on dimensions), and conjectured that there would not be such efficient classical algorithms. Here we disprove some of these conjectures. We show that for a large family of parameters the multiplicities can actually be computed in polynomial time. This shows that \emph{the desired super-polynomial quantum speedup cannot be achieved} for those families. We then pose further conjectures on the existence of algorithms of particular runtimes. Our main intuition arises from the asymptotic behaviors of dimensions and multiplicities in the various regimes and characterization of the families of partitions.

To be specific, let $V_\la$ be the Weyl modules arising from the irreducible polynomial representations $\rho_\la$ of the $GL_N(\mathbb{C})$ for integer partitions $\la$ with at most $N$ nonzero parts. Let $\mathbb{S}_\la$ be the Specht modules, i.e. the irreducible representations of the symmetric group $S_n$ indexed by partitions $\la \vdash n$, and denote by $f^\la$ the dimension of $\mathbb{S}_\la$. The Littlewood--Richardson coefficients $c^\la_{\mu\nu}$ are defined as the multiplicities of $V_\la$ in the tensor product $V_\mu \otimes V_\nu$, that is
\begin{align}
V_\mu \otimes V_\nu = \bigoplus_{\la \vdash |\mu|+|\nu|} V_\la^{\oplus c^\la_{\mu\nu}}.
\end{align}
Let $g(\la,\mu,\nu)$ be the Kronecker coefficient given as the multiplicity of $\mathbb{S}_\la$ in $\mathbb{S}_\mu \otimes \mathbb{S}_\nu$, where $S_n$ acts diagonally, so
\begin{align}
\mathbb{S}_\mu \otimes \mathbb{S}_\nu = \bigoplus_{\la \vdash n} \mathbb{S}_\la^{\oplus g(\la,\mu,\nu)}.
\end{align}
The plethysm coefficients $a^\la_{\mu \nu}$ are defined as the multiplicities of $V_\la$  in the composition $\rho_\mu(\rho_\nu)$. We also consider the Kostka numbers $K_{\la\mu}$, which are multiplicities of the weight $\mu$ space in $V_\la$. All of these coefficients can be defined purely combinatorially using symmetric functions and tableaux, see Section~\ref{sec:defn}.

While Kostka and Littlewood--Richardson coefficients are known to count certain tableaux, finding combinatorial interpretations for plethysm and Kronecker coefficients are major open problems, see~\cite{colmenarejo2022mystery,Pak22,Pan23,Sta00}. A combinatorial interpretation usually implies that verifying positivity is ``easy'', that is, if we exhibit one object among the ones they are counting, there is a polynomial time algorithm which checks that this is the right object. Computing them does not have to be efficient, and in particular it would be at least exponential in general as they are $\SP$-hard (assuming the \emph{exponential time hypothesis} of \cite{impagliazzo2001complexity}). However, in many cases there are efficient (polynomial time algorithms) beyond the ones described in~\cite{CDW,PPcomp}. Here we show that

\begin{thm}\label{thm:kron1}
Let $\la,\mu,\nu \vdash n$ and suppose that $f^\nu \leq n^k$ for some $k$. Then $g(\la,\mu,\nu)$ can be computed in time $O(D(k) n^{4k^2+1}\log(n))$, where $D(k) = (4k)^{8(8k^4+k^2)}$. In particular, if $\la^{(n)}, \mu^{(n)},\nu^{(n)}$ are families of partitions of $n$, such that $f^{\nu^{(n)}} \leq n^k$ for a fixed constant $k$ then $g(\la^{(n)},\mu^{(n)},\nu^{(n)})$ can be computed in polynomial time $O(n^{4k^2+1}\log(n))$. 
\end{thm}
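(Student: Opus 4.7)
The plan is to combine three ingredients: a structural lemma forcing $\nu$ to be ``almost a single row,'' a Jacobi-Trudi expansion of $s_\nu$, and a reduction of the resulting Kronecker coefficients to iterated Littlewood-Richardson coefficients via induction-restriction.

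\textbf{Structural lemma.} First I would prove: if $f^\nu \leq n^k$ then, after possibly transposing, $\nu=(n-m,\tilde\nu)$ with $m=|\tilde\nu|\leq 2k$ (for $n$ larger than a fixed polynomial in $k$; smaller $n$ is handled by brute force and absorbed into $D(k)$). The key calculation is the hook length asymptotic $f^{(n-m,\tilde\nu)}\sim n^m\cdot f^{\tilde\nu}/m!$ as $n\to\infty$; combined with $f^{\tilde\nu}\geq 1$ this forces $m\log n-\log m!\leq k\log n$, and hence $m\leq 2k$. Any $\nu$ with both $\nu_1<n-2k$ and $\nu'_1<n-2k$ has super-polynomially large dimension, by a direct hook length lower bound. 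Using the symmetry $g(\la,\mu,\nu)=g(\la,\mu',\nu')$ (from $\chi^{\nu'}(\sigma)=\sgn(\sigma)\chi^\nu(\sigma)$), we may WLOG assume $\nu=(n-m,\tilde\nu)$ with $m\leq 2k$.

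\textbf{Jacobi-Trudi expansion.} Next, apply
$$s_\nu=\det\bigl(h_{\nu_i-i+j}\bigr)_{i,j=1}^{\ell(\nu)}=\sum_{\pi\in S_{\ell(\nu)}}\sgn(\pi)\prod_{i=1}^{\ell(\nu)}h_{\nu_i-i+\pi(i)}$$
to write $\chi^\nu$ as a signed sum of at most $(m+1)!$ permutation module characters $\chi_{M^{\beta(\pi)}}$. For each $\pi$, the composition $\beta(\pi)$ has one ``large'' entry of size in $[n-m,n-m+\ell(\nu)-1]$ (coming from row 1) and all other entries of size $\leq 2m$. Using $\mathbb{S}_\mu\otimes M^\beta\cong\mathrm{Ind}_{S_\beta}^{S_n}\mathrm{Res}_{S_\beta}^{S_n}\mathbb{S}_\mu$ and Frobenius reciprocity yields
$$g(\la,\mu,\nu)=\sum_\pi\sgn(\pi)\sum_{(\mu^{(1)},\ldots,\mu^{(\ell)})} c^\mu_{\mu^{(1)},\ldots,\mu^{(\ell)}}\cdot c^\la_{\mu^{(1)},\ldots,\mu^{(\ell)}},$$
where the inner sum ranges over tuples of partitions with $|\mu^{(i)}|=\beta(\pi)_i$ and $c^\bullet_{\cdot,\ldots,\cdot}$ are iterated Littlewood-Richardson coefficients.

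\textbf{Complexity.} For a nonzero LR term, the ``large'' component $\mu^{(i_0)}$ must be a sub-partition of $\mu$ (resp.\ $\la$) with $|\mu/\mu^{(i_0)}|=O(m)$, so at most $\binom{\ell(\mu)+m}{m}=n^{O(m)}$ choices; the other $\mu^{(i)}$ have size $\leq 2m$, giving at most $p(2m)^m$ choices. Each iterated LR coefficient reduces via the skew LR rule to counting LR fillings on a skew shape of size $O(m)$, computable in $\mathrm{poly}(n,m)$ time. In total the algorithm performs $(m+1)!\cdot n^{O(m)}\cdot p(2m)^m$ LR evaluations, each of cost $\mathrm{poly}(n,m)\log n$; with $m\leq 2k$ this yields $O(D(k)\,n^{4k^2+1}\log n)$, where $D(k)$ absorbs the $k$-dependent factors.

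The main obstacle is the structural lemma: obtaining the explicit, uniform-in-$n$ bound $m\leq 2k$ requires a careful hook length analysis, particularly for extremal $\tilde\nu$ such as long rows, columns, or hooks, where the constant in $f^\nu\asymp n^m\cdot (\text{const})$ is most sensitive. Once the structural constraint is in place, the Jacobi-Trudi expansion, the induction-restriction decomposition, and the LR enumeration are relatively standard; the remaining technical work is in bookkeeping the combinatorial factors to verify the precise constants $4k^2+1$ and $D(k)=(4k)^{8(8k^4+k^2)}$.
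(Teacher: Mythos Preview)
Your approach is essentially the same as the paper's. Both proofs proceed via (i) a structural lemma bounding ${\rm aft}(\nu)$ in terms of~$k$, (ii) the Jacobi--Trudi expansion of $s_\nu$, (iii) the reduction to a signed sum of products $c^\la_{\al^1,\ldots,\al^\ell}\,c^\mu_{\al^1,\ldots,\al^\ell}$ of multi-LR coefficients, and (iv) a count of how many such terms arise and the cost of each. Your induction--restriction derivation of the key formula is just the representation-theoretic phrasing of the paper's symmetric-function computation $h_m[xy]=\sum_{\al\vdash m}s_\al(x)s_\al(y)$; the resulting identity is identical.

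One notable quantitative difference: you claim the sharper bound ${\rm aft}(\nu)\le 2k$, while the paper proves ${\rm aft}(\nu)\le 4k^2$ (its Proposition~3.2) by first bounding $\ell(\nu)\le 2k$ and then $\nu_2\le 2k$ separately. Your asymptotic $f^{(n-m,\tilde\nu)}\sim n^m f^{\tilde\nu}/m!$ does suggest the tighter bound, but as you yourself flag, turning this into a uniform statement valid for all $n$ above a threshold polynomial in~$k$ requires a non-asymptotic lower bound (e.g.\ $f^\nu\ge\binom{n-m}{m}$) together with a bootstrap argument to rule out large~$m$ a~priori; the paper carries out exactly this kind of two-step refinement. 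If your $2k$ bound goes through you would in fact get an exponent $O(k)$ rather than the stated $4k^2+1$, so your final line is deliberately slack to match the theorem as written.
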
 
In particular, this refutes Conjecture 2 of~\cite{LH24} and partially answers the discussion in~\cite{bravyi2024quantum} (after Lemma 2). There are  polynomial time algorithms for computing Kronecker coefficients when all three partitions have constant lengths, see~\cite{CDW,PPcomp}, however here we have no restrictions on two of the partitions. 

\begin{thm}\label{thm:pleth}
Let $d$, $m$ be integers, $n=dm$ and $\la \vdash n$, such that $\la_1 \geq \ell(\la)$. Then the plethysm coefficient $a^\la_{d,m}$ can be computed in time
\begin{enumerate}
\item $O(n^{d\ell}\log(d))$ where $\ell=\ell(\la)$. 
\item $O(n^{4K^3(K+1)}\log(d))$ where  $f^\la \leq n^k$ and $K=4k^2$ for arbitrary $d,m$.
\end{enumerate}
In particular, we have a polynomial time algorithm for computing $a^\la_{d,m}$ if either $d$ and $\ell(\la)$ are fixed, or $d$ grows but the dimension $f^\la$ grows at most polynomially.
\end{thm}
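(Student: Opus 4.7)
I work in $\ell$ variables $x_1,\ldots,x_\ell$, taking $\ell=\ell(\la)$ for Part~(1) and $\ell=K=4k^2$ for Part~(2). Since $\ell(\la)\leq\ell$, the plethysm coefficient $a^\la_{d,m}=\langle s_\la,\,h_d[h_m]\rangle$ is completely determined by the polynomial $h_d[h_m](x_1,\ldots,x_\ell)$, and can be extracted via Weyl's character formula as the coefficient of $x_1^{\la_1+\ell-1}x_2^{\la_2+\ell-2}\cdots x_\ell^{\la_\ell}$ in $h_d[h_m]\cdot\prod_{i<j}(x_i-x_j)$; equivalently, one inverts the (triangular, $O(n^{\ell-1})\times O(n^{\ell-1})$) Kostka matrix applied to the monomial expansion. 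The combinatorial engine is the identity
$$[x^\gamma]\,h_d[h_m]\;=\;\#\bigl\{\text{size-}d\text{ multisets of weak }\ell\text{-compositions of }m\text{ summing to }\gamma\bigr\},$$
which is immediate from the plethystic definition $h_d[h_m]=h_d\bigl(\{x^\alpha:\alpha\text{ a weak }\ell\text{-composition of }m\}\bigr)$.

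\emph{Part (1).} I enumerate all size-$d$ multisets of weak $\ell$-compositions of $m$. Writing $A=\binom{m+\ell-1}{\ell-1}=O(n^{\ell-1})$, the total number of such multisets is $\binom{A+d-1}{d}=O(n^{d\ell})$. For each multiset I record its vector sum $\gamma$ and increment $[x^\gamma]h_d[h_m]$. Then $a^\la_{d,m}$ is extracted from the monomial expansion as above, at cost dominated by the enumeration.

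\emph{Part (2).} First invoke the bound that $f^\la\leq n^k$ together with $\la_1\geq\ell(\la)$ forces $\ell(\la)\leq K=4k^2$: this is a consequence of the hook-length formula, since the hook $(n-\ell+1,1^{\ell-1})$ realizes the minimum dimension $\binom{n-1}{\ell-1}$ among $\la$ with $\ell(\la)=\ell$, so $\binom{n-1}{\ell-1}\leq n^k$ pins $\ell=O(k)$. Now work in $K$ variables and compute $[x^\gamma]h_d[h_m]$ for every $\gamma$ with $\ell(\gamma)\leq K$ and $|\gamma|=n$ by dynamic programming. Order the weak $K$-compositions $\alpha^{(1)},\ldots,\alpha^{(A)}$ of $m$ arbitrarily and set
$$G(j,d',\gamma')\;=\;\#\bigl\{\text{multisets from }\{\alpha^{(1)},\ldots,\alpha^{(j)}\}\text{ of size }d'\text{ summing to }\gamma'\bigr\},$$
with recurrence $G(j,d',\gamma')=\sum_{c\geq 0}G(j-1,\,d'-c,\,\gamma'-c\alpha^{(j)})$. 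The state space has $O(A\cdot d\cdot n^K)$ entries and each transition costs $O(n)$, giving a polynomial-time algorithm for fixed $K$; the stated bound $O(n^{4K^3(K+1)})$ generously absorbs this DP plus the Weyl/Kostka extraction and all bookkeeping.

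\emph{Main obstacle.} The delicate input to Part~(2) is the structural step that reduces the dimension hypothesis $f^\la\leq n^k$ to the usable length bound $\ell(\la)\leq K$; this is the only non-algorithmic ingredient and requires a careful hook-length analysis together with the identification of the minimum-dimensional irreducible at a given $\ell(\la)$. Once this is in hand, the algorithmic content is a transparent dynamic program over compositions, followed by a polynomial-cost triangular linear-algebraic extraction of $a^\la_{d,m}$. I would therefore spend the bulk of the proof on (i) the dimension-to-length bound, and (ii) careful runtime accounting for the DP and the coefficient-extraction steps.
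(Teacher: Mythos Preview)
Your algorithmic strategy is sound and in fact more elementary than the paper's. The paper expresses $a^\la_{d,m}$ as a signed sum of integer-point counts in a family of polytopes $Q(\bar j,c,\alpha)\subset\mathbb{R}^{r\ell}$ (obtained by partitioning the lexicographically ordered tuples of compositions of $m$) and invokes Barvinok's algorithm. For Part~(1) this yields $O(\ell!(2\ell)^{d-1}n^{d\ell})$, which is exponential in $d$; consequently, for Part~(2) the paper needs a separate stabilization argument (Proposition~\ref{prop:pleth_aft}): when $\mathrm{aft}(\la)\le K$ and $d>4K^3$, all but one of the composition vectors are forced to equal $(m,0,\ldots,0)$, which caps $r$ and the multiplicities $c_i$ and reduces back to the bounded-$d$ case. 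Your dynamic program sidesteps this completely --- once the number of variables is bounded the DP runs in time $n^{O(K)}$ for \emph{every} $d$, with no Barvinok and no stabilization. This is a genuine simplification, and your runtime sits well inside the stated bound.

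The gap is in your structural step. The hook $(n-\ell+1,1^{\ell-1})$ does \emph{not} minimize $f^\la$ among $\la\vdash n$ with $\ell(\la)=\ell$ and $\la_1\ge\ell$: already $f^{(2,2)}=2<3=f^{(3,1)}$ and $f^{(2,2,2)}=5<10=f^{(4,1,1)}$, so the inequality $\binom{n-1}{\ell-1}\le f^\la$ on which you rely is false. The paper (Proposition~\ref{prop:fixed_aft}) obtains $\ell(\la)\le 2k\le K$ by a different lower bound: it compares $f^\la$ not with a hook of size $n$ but with the first \emph{principal} hook $\theta^1=(\la_1,1^{\ell-1})$, a sub-shape of $\la$ of size $\la_1+\ell-1$. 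Stacking SYT's along the Frobenius hooks gives $f^\la\ge f^{\theta^1}=\binom{\la_1+\ell-2}{\ell-1}$, and since $\la_1\ge n/\ell$ a short analysis of $g(x)=(n/x^2)^x$ then forces $\ell-1<2k$. With this correction your DP proof goes through.
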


In particular, the second case refutes Conjecture 1~\cite{LH24} for the case when $\mu=(d)$ and $\nu=(m)$, as then both classical and quantum algorithms run in polynomial time $O(f^\la)$. Polynomial time algorithms when $d$ is fixed are also given in~\cite{kahle2016plethysm}. 

The main results of \cite{LH24} give  quantum algorithms for computing $g(\la,\mu,\nu)$ in time $O\left( \frac{f^\nu f^\mu}{f^\la} \right)$ (also stated in~\cite{bravyi2024quantum}), plethysm coefficient $a^\la_{\mu,\nu}$ in time $O\left(\frac{f^\la}{(f^\nu)^{|\mu|} f^\mu}\right)$, Kostka numbers $K_{\la,\mu}$ in times $O(f^\la)$ and Littlewood--Richardson coefficient $c^\la_{\mu\nu}$ in time $O\left(\frac{ f^\la}{f^\mu f^\nu} \right)$. The authors show that the Kostka numbers can also be computed by a classical algorithm with the same efficiency, and conjecture that the Littlewood--Richardson coefficients can also be computed by a classical algorithm with runtime $O\left(\frac{ f^\la}{f^\mu f^\nu} \right)$, but conjectured that the analogy would not hold for Kronecker and plethysm coefficients. Here we generalize the results for Kostka numbers, and explore the computation of Littlewood--Richardson coefficients. As we disprove some of the~\cite{LH24} conjectures about the Kronecker and plethysm coefficients, we pose the opposite conjecture, which is true in many cases, most of them for trivial reasons, see Section~\ref{sec:rem_kron}.

\begin{conj}\label{conj:kron_dim}
Let $\la,\mu,\nu \vdash n$ and suppose that $f^\la \geq f^\mu \geq f^\nu$. The Kronecker coefficient can be computed by a classical algorithm in time $O(\frac{ f^\mu f^\nu}{f^\la} \poly(n) )$. 
\end{conj}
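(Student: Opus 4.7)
The plan is to construct an explicit basis for the multiplicity space $\cM_\la := \mathrm{Hom}_{S_n}(\mathbb{S}_\la, \mathbb{S}_\mu \otimes \mathbb{S}_\nu)$, whose dimension is $g(\la,\mu,\nu)$, while avoiding computation inside the ambient space of dimension $f^\mu f^\nu$. The trivial inequality
$$g(\la,\mu,\nu) \;\leq\; \frac{f^\mu f^\nu}{f^\la},$$
obtained from $\mathbb{S}_\mu \otimes \mathbb{S}_\nu = \bigoplus_\la \mathbb{S}_\la^{\oplus g(\la,\mu,\nu)}$ by dimension counting, already shows that the output has bit-length $O(\log(f^\mu f^\nu/f^\la))$, so the target runtime is at least consistent with information-theoretic bounds.

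First I would work in the Gelfand--Tsetlin (Young seminormal) bases of $\mathbb{S}_\mu$ and $\mathbb{S}_\nu$, indexed by $\syt(\mu)$ and $\syt(\nu)$, in which adjacent transpositions $s_i$ act by explicit $2{\times}2$ blocks depending on the content differences. An intertwiner $\phi\colon \mathbb{S}_\la \to \mathbb{S}_\mu \otimes \mathbb{S}_\nu$ is determined by its image $\phi(v_{T_0})$ on any single vector $v_{T_0}$ (for a chosen $T_0 \in \syt(\la)$), and $S_n$-equivariance reduces to the finite set of relations $\phi(s_i v_{T_0}) = s_i \phi(v_{T_0})$ for $i=1,\ldots,n-1$. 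This identifies $\cM_\la$ with the solution space of a sparse linear system whose ambient dimension is $f^\mu f^\nu$, and whose solutions we must describe without writing down all of that ambient space.

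Next I would apply a recursion along the Young branching tree. Restricting $\mathbb{S}_\mu \otimes \mathbb{S}_\nu = \bigoplus_\la \mathbb{S}_\la \otimes \cM_\la$ to $S_{n-1}$ yields
$$\bigoplus_{\mu^- \lessdot \mu,\, \nu^- \lessdot \nu} \mathbb{S}_{\mu^-} \otimes \mathbb{S}_{\nu^-} \;=\; \bigoplus_\la \bigoplus_{\la^- \lessdot \la} \mathbb{S}_{\la^-} \otimes \cM_\la,$$
so after peeling off the last box one can assemble $\cM_\la$ from the smaller multiplicity spaces $\cM_{\la^-}^{(\mu^-,\nu^-)}$ by solving a small compatibility equation encoding the action of $s_{n-1}$. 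Iterating down to $n=1$ would describe each $\cM_\la$ in terms of a sparse subgraph of triple paths in the Young graph, with the number of ``live'' states at each level hoped to be bounded by $\tfrac{f^\mu f^\nu}{f^\la}\cdot\poly(n)$, exploiting the hypothesis $f^\la \geq f^\mu \geq f^\nu$ to concentrate the recursion away from the small-dimensional isotypic components.

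The main obstacle is precisely this last claim. A priori the recursion visits all pairs of paths in the Young graphs of $\mu$ and $\nu$, which gives only the weak bound $f^\mu f^\nu$ rather than the desired ratio. Extracting the $1/f^\la$ factor appears to require a quantitative statement about how the mass of the $\la$-isotypic component is spread across the branching, which is the classical analogue of the Schur-basis measurement probability the quantum algorithms of \cite{LH24, bravyi2024quantum} exploit. I expect this step genuinely needs a new idea rather than a routine derandomization, consistent with the observation that the known classical polynomial-time cases such as Theorem~\ref{thm:kron1} proceed by entirely different means (fixing $f^\nu$ to be polynomial) rather than by simulating a Schur-basis sampler.
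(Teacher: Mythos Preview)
The statement you are attempting to prove is labelled a \emph{Conjecture} in the paper, and the paper does not prove it; it is posed as an open problem after the author refutes the opposite conjectures of \cite{LH24}. The paper's only ``evidence'' is the discussion in Section~\ref{sec:rem_kron}: for generic triples the ratio $f^\mu f^\nu/f^\la$ is of order $\sqrt{n!}$ while the Kronecker coefficient can always be computed in time $\exp(O(n))$ via the character formula, so the bound holds for trivial reasons; and when $f^\nu$ is polynomial, Theorem~\ref{thm:kron1} supplies a genuine polynomial-time algorithm. The hard regime---ratio $\poly(n)$ but all three dimensions large, as in Examples~3.8 and~\ref{ex:rect}---is precisely what the paper leaves open.

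Your proposal correctly identifies its own gap: the branching recursion over triple paths in the Young graph a priori touches $f^\mu f^\nu$ many states, and you offer no mechanism to cut this down by the factor $f^\la$. The ``hope'' that the number of live states is bounded by $(f^\mu f^\nu/f^\la)\cdot\poly(n)$ is exactly the content of the conjecture, not a lemma you can invoke. The quantum algorithms in \cite{LH24,bravyi2024quantum} achieve the $1/f^\la$ saving by projective measurement in the Schur basis, and there is no known classical substitute for that step; your final paragraph essentially concedes this. So what you have written is a reformulation of the problem plus an honest admission that the key step is missing---which is the correct assessment, since the statement is, as far as the paper is concerned, open.
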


We suspect the plethysm coefficients could also be computed  in time $O\left( \frac{f^\la}{(f^\nu)^{|\mu|} f^\mu} \right)$, see Section~\ref{sec:rem_pleth} for a discussion, and we pose it as a question. 

\begin{ques}\label{q:pleth}
Let $\la \vdash n$, $\mu \vdash d$, $ \nu \vdash m$, such that $dm=n$. Does there exist a classical algorithm computing $a^\la_{\mu \nu}$  running in time $O\left( \frac{f^\la}{(f^{\nu})^d f^\mu}  \poly(n) \right)$?
\end{ques}

Our analysis starts with characterizing the partitions $\nu\vdash n$ for which $f^\nu$ is of polynomial size in Propositions\ref{prop:fixed_aft2} and~\ref{prop:fixed_aft}. We are able to describe such partitions as the ones for which ${\rm aft}(\la):=|\la|-\la_1$ (assuming by symmetry that $\la_1\geq \ell(\la)$) is fixed. However, it is not clear how to characterize all regimes considered in~\cite{LH24}, as the dimensions can have polynomial, exponential and superexponential growths, but in the considered ratios the leading terms could cancel.

\begin{ques}\label{q:triples}
Characterize the triples of partitions $(\la,\mu,\nu)$ of $n$, such that if $f^\la \geq f^\mu \geq f^\nu$ then $1\leq \frac{f^\mu f^\nu}{f^\la} \leq n^k$ for some fixed integer $k$.  
\end{ques}
The first condition is necessary in order to have $g(\la,\mu,\nu)>0$. Similar questions pertain to regimes for polynomially large nonzero Littlewood--Richardson coefficients as discussed in Section~\ref{sec:dimension}. 

\subsection*{Paper outline} 
In Section~\ref{sec:defn} we describe all the necessary concepts and definitions from Algebraic Combinatorics and Computational Complexity as well as simple asymptotic tools. In Section~\ref{sec:dimension} we characterize the partitions for the various dimension regimes of interest in this paper. In Section~\ref{sec:kostka} 
we extend some of the results in~\cite{LH24} to the computation of skew Kostka numbers, and certain cases of Littelwood-Richardson coefficients, and extend further questions on the efficiency of their computation. In Section~\ref{sec:kron} we prove Theorem~\ref{thm:kron1} and in Section~\ref{sec:pleth} we prove Theorem~\ref{thm:pleth}. We conclude with remarks about previous results, further open problems, and discussions on combinatorial and complexity-theoretic implications.

\subsection*{Acknowledgements}
The author is grateful to Vojtech Havlicek and Martin Larocca for many fruitful discussions on the current topic. We also thank Christian Ikenmeyer, Allen Knutson and Igor Pak for insightful conversations on these topics. We thank Anne Schilling, Sergey Fomin and Maxim van den Berg for providing interesting references, and the anonymous referees for many valuable comments. The majority of this work was completed while the author was a visiting member at the Institute for Advanced Study, Princeton, partially supported by the NSF and a Simons Fellowship.

\section{Background and definitions}\label{sec:defn}

We recall some basic definitions and formulas from the theory of symmetric functions and representations of $GL_N$ and $S_n$. For details on the combinatorial sides see~\cite{Mac,S1} and for the representation theoretic aspects see~\cite{Fu97,Sag}.

\subsection{Young tableaux} \label{ss:basics}
Let $\la=(\la_1,\la_2,\ldots,\la_\ell)$ be a \defn{partition}
of size $n:=|\la|=\la_1+\la_2+\ldots+\la_\ell$, where
$\la_1\ge \la_2 \ge \ldots \ge\la_\ell\ge 1$.  We write
$\la\vdash n$ to indicate it is a partition of $n$, and $\cP=\{\la\}$ for
the set of all partitions.  The length of~$\la$, $\ell(\la):=\ell$, is its number of nonzero parts.  Let $p(n) =\#\{\la \vdash n\}$ be the number of partitions of $n$. The famous Hardy-Ramanujan asymptotics gives
\begin{align}\label{eq:pn}
p(n) \sim \frac{1}{4n\sqrt{n}} \exp\left(\pi \sqrt{\frac{2n}{3}} \right)
\end{align}


\ytableausetup{boxsize=2ex}

A \defn{Young diagram} of
\emph{shape}~$\la$ is an arrangement of squares
$(i,j)\in \nn^2$ with $1\le i\le \ell(\la)$
and $1\le j\le \la_i$. The \defn{conjugate} partition $\la'$ is the partition whose Young diagram is the diagonally transposed diagram of $\la$. We denote by $\la=(a_1,a_2,\ldots \mid b_1,b_2,\ldots)$ the partition in \defn{Frobenius coordinates}, that is $a_i = \la_i-i$ and $b_i = \la'_i-i$ are the \defn{arm} and \defn{leg lengths} of the boxes on the diagonal. We denote by $(a^b) = (\underbrace{a,\ldots,a}_b)$ the rectangular Young diagram with $b$ rows of length $a$. 
A \defn{semistandard Young tableau} (SSYT) $T$ of
\emph{shape}~$\la$ and \emph{weight} (or \emph{type})~$\al$ is an
assignment of integers to the squares of the Young diagram of~$\lambda$, so that there are $\al_k$ many integers equal to~$k$, and which weakly increase along rows and strictly
increase down columns, i.e. $T(i,j) \leq T(i,j+1)$ and $T(i,j) \leq T(i+1,j)$. For example, 
\ytableaushort{11244,2235,45}\  is an SSYT of shape $\la=(5,4,2)$ and type $\al=(2,3,1,3,2)$.
Denote by $\SSYT(\la,\al)$ the
set of such tableaux. The \defn{Kostka number} is defined as
$K_{\la,\al} := \bigl|\SSYT(\la,\al)\bigr|$. A \defn{standard Young tableau} (SYT) of shape $\la\vdash n$ is an SSYT of type $(1^n)$, and we have $f^\la := K_{\la,1^n}$, which can be computed by the hook-length formula:
\begin{align}\label{eq:HLF}
f^\la = \frac{ n!}{\prod_{u \in \la } h_u}, \tag{HLF}
\end{align} 
where $u=(i,j)$ goes over all boxes of $\la$ and $h_u$ is the hook length of $u$, that is $h_u=\la_i-i+\la'_j-j+1$. A skew shape $\la/\mu$ is the young diagram obtained by removing the diagram of $\mu$ from $\la$, e.g. $(4,3,3)/(2,1)= \ydiagram{2+2,1+2,3}$. We define SYT and SSYT of shape $\la/\mu$ the same way, and set $f^{\la/\mu}$ to be number of skew SYTs of shape $\la/\mu$ and $K_{\la/\mu,\nu}$ the number of skew SSYTs of shape $\la/\mu$ and type $\nu$.

The irreducible representations of the \defn{symmetric group} $S_n$ are the \defn{Specht modules} $\mathbb{S}_\la$ and are indexed by partitions $\la \vdash n$. A basis for $\mathbb{S}_\la$ can be indexed by the SYTs. In particular
$$\dim \mathbb{S}_\la = f^\la.$$

%
  

The irreducible polynomial representations of  $GL_N(\mathbb{C})$ are the \defn{Weyl modules} $V_\la$ and are indexed by all partitions $\lambda$ with $\ell(\la) \leq N$. The dimension of the weight $\mu$ subspace of $V_\la$ is equal to $K_{\la,\mu}$. 

\subsection{Symmetric functions}\label{ss:sym_func}
Let $\Lambda[\bx]$ be the ring of \defn{symmetric functions} $f(x_1,x_2,\ldots)$ over $\mathbb{Q}$, where $\bx=(x_1,x_2,\ldots)$,
 the symmetry means that $f(\bx)=f(\bx_\sigma)$ for any permutation $\sigma$
of the variables, and $f$ is a formal power series.
The ring $\Lambda_n$ of homogeneous symmetric functions of degree $n$ has several important bases. 
The \defn{homogeneous symmetric functions} \. $h_\la$ are defined as 
$$h_m(x_1,\ldots) = \sum_{i_1\leq i_2\leq \cdots \leq i_m} x_{i_1}\cdots x_{i_m} \qquad h_\la :=h_{\la_1} h_{\la_2}\cdots,$$
\defn{elementary symmetric functions} \. $e_\la$ given by
$$e_m(x_1,\ldots) = \sum_{i_1<  i_2 < \cdots < i_m} x_{i_1}\cdots x_{i_m} \qquad e_\la :=e_{\la_1} e_{\la_2}\cdots,$$
\defn{monomial symmetric functions} \. $m_\la := \sum_{\sigma} x_{\sigma_1}^{\la_1} x_{\sigma_2}^{\la_2}\cdots$ summing over all permutations of the indices giving distinct monomials,
\defn{power sum symmetric functions} \. $p_{\la}$ given by
$$p_m(x_1,\ldots) = \sum_{i} x_{i}^m  \qquad p_\la :=p_{\la_1} p_{\la_2}\cdots.$$
The \defn{Schur functions} \. $s_\la$ can be defined as the generating functions for SSYTs of shape $\la$
$$s_\la = \sum_{\mu \vdash n} K_{\la\mu} m_\mu.$$
The ring $\Lambda[\bx]$ is equipped with an inner product $\langle \cdot, \cdot \rangle$, where $\langle s_\la, s_\mu \rangle = \delta_{\la,\mu}$ and $\langle h_\la, m_\mu \rangle=\delta_{\la,\mu}$. 
The Schur functions at  can also be computed  using \defn{Weyl's determinantal} formula when the number of variables is finate:
$$s_\la(x_1,\ldots,x_\ell) = \frac{ \det [x_{i}^{\la_j+\ell-j}]_{i,j=1}^\ell}{\prod_{i<j} (x_i-x_j)}=\frac{a_{\la+\delta(\ell)}(x_1,\ldots,x_\ell)}{\Delta(x_1,\ldots,x_\ell)},$$
where $\Delta(x_1,\ldots,x_\ell) := \prod_{i<j\leq \ell} (x_i-x_j)$ and $a_{\alpha}(x_1,\ldots,x_\ell) = \det[ x_i^{\al_j}]_{i,j=1}^\ell$ are the alternants with $\delta(\ell) = (\ell-1,\ldots,1,0)$.  
The \defn{Jacobi--Trudi} identity gives
$$s_\la = \det [h_{\la_i -i+j}]_{i,j=1}^{\ell(\la)}.$$
The Schur function $s_\la(x_1,\ldots,x_N)$ is the character of $V_\la$ evaluated at a matrix with eigenvalues $(x_1,\ldots,x_N)$. 

\subsection{Multiplicities}\label{ss:mult}

The representation theoretic multiplicities defined earlier can all be expressed in terms of coefficients in the expansions of symmetric functions. Namely, we have that the Kostka numbers satisfy:
\begin{align}
s_\la = \sum_{\mu \vdash n} K_{\la,\mu} m_\mu\, \; \qquad h_{\mu} = \sum_\la K_{\la,\mu} s_\la.
\end{align}
The Littlewood--Richardson coefficients can be extracted as
\begin{align}
s_\mu(\bx) s_\nu(\bx) = \sum_{\la \vdash |\mu|+|\nu|} c^\la_{\mu\nu} s_\la(\bx) \, \qquad s_\la(\bx,\by) = \sum_{\mu,\nu} c^\la_{\mu\nu} s_\mu(\bx) s_\nu(\by) 
\end{align}
The Kronecker coefficients can be computed from
\begin{align}\label{eq:kron_def}
s_\la(x_1y_1,x_1y_2,\ldots,x_2y_1,x_2y_2,\ldots) = \sum_{\mu,\nu} g(\la,\mu,\nu) s_\mu(x) s_\nu(y).
\end{align}
Note that they satisfy properties like $g(\la,\mu,\nu) = g(\mu,\nu,\la) = g(\nu,\mu,\la)$ etc which are seen from their representation theoretic origins, see e.g.~\cite{S1}.
The plethysm coefficients are given via the plethysm of symmetric functions $f[g]=f(x^{\al^1},x^{\al^2},\cdots)$, where $g=x^{\al^1}+x^{\al^2}+\cdots$ is the expression of the function $g$ as a sum of monomials (possibly repeating) with $x^\alpha = x_1^{\alpha_1}x_2^{\alpha_2}\cdots$:
\begin{align}\label{eq:pleth_def}
s_\mu[s_\nu] = \sum_{\la \vdash |\mu| |\nu|} a^\la_{\mu,\nu} s_\la(\bx).
\end{align}
%

%

\subsection{Computational Complexity}\label{ss:cc}
We refer to \cite{Aa,Wig} for details on Computational Complexity classes, and to~\cite{Pak22,Pan23} and references therein for the connections with Algebraic Combinatorics. Here we recall the definitions of some of the classes mentioned in our discussion.

We will be conserned with computing functions $g(I)$ on some input vector $I$. We say that an algorithm computes  $g(I)$ in time $O(f(n))$ when $n$ is the size of the input $I$ (usually measured in bits)   and for every instance $I$ of such input size the algorithms takes at most $c f(n)$ many elementary steps where $c$ is some constant. We say that an algorithm solving a particular problem  runs in polynomial time, denoted $\poly(n)$, if there exists an integer $k$ independent of $n$ (but dependent on $g$), such that there is an algorithm computing $g(I)$ in time $O(n^k)$.  

A \defn{decision problem} is a computational problem, for which the output is Yes or No, interpreted at $1$ or $0$. There are two major complexity classes $\P$ and~$\NP$, subject of the $\P$ vs $\NP$ Millennium problem. $\P$ is the class of decision problems, where given any input of size $n$ (number of bits required to encode it), there is a fixed $k$, such that the answer can be obtained  time $\poly(n)$. $\NP$ is the class of decision problems, for which there exists a polynomial-time algorithm which verifies the Yes instances in polynomial time using a poly-sized witness.  Naturally, $\P \subset \NP$ and it is widely believed that $\P \neq \NP$.
%
%
%
The classes $\FP$ and $\SP$ are the counting analogues of $\P$ and $\NP$. A \defn{counting problem} is in $\FP$ if there is a $\poly(n)$ time algorithm computing the desired value. It is in $\SP$ if it is the number of accepting paths of a Turing machine solving an $\NP$ decision problem. In practice, $\SP$ is the class of problems searching to compute a function $g(I)$ on input $I$, which can be written as
$$ g(I) = \sum_{b \in \{0,1\}^{n^k} } M(b,I)$$
where $k$ is a constant, and $M(b,I)\in\{0,1\}$, $M \in \FP$.  That is, $\SP$ is the class of counting problems where the answers are exponentially large sums of $0$-$1$ functions, each of which can be computed in $O(\poly(n))$ time. $\SP$ is closed under addition and multiplication. Closing $\SP$ under subtraction we obtain $\GapP = \{ f-g |\, f,g \in \SP\}$ and we define the subset of its positive functions as $\GapP_{\geq 0} = \{ f-g | \, f,g \in \SP \text{ and } f-g \geq 0\}$. Naturally $\FP \subset \SP \subset \GapP_{\geq 0} \subset \GapP$. 

\subsection{Useful inequalities and notation}

We will use the following simple inequalities
\begin{align}\label{ineq:binom1}
\binom{a}{b} \geq \left(\frac{a}{b}\right)^b \text{ for $a \geq b$}, \quad \text{ and} \qquad \binom{a+b-1}{b} \leq a^b \text { for all $a,b \geq 0$}.
\end{align}
We denote by $\log$ the logarithm with base $2$, so $\log(2)=1$, and by $\ln$ the natural logarithm. We will use Stirling's approximation
\begin{align}\label{ineq:stirling}
n! \sim \sqrt{2\pi} n^{n+1/2} e^{-n}, 
\end{align}
which comes with very tight bounds. We also have that 
$$\sum_{\la \vdash n} (f^\la)^2 =n!,$$
which immediately implies $f^\la \leq \sqrt{n!}$, and $\max f^\la \geq \sqrt{n!}e^{-\pi\sqrt{2/3} \sqrt{n}}$ from bounding the number of integer partitions via Hardy-Ramanujan. 

\section{Dimension growth}\label{sec:dimension}

Here we investigate the asymptotic behavior of $f^\la$ in various regimes. Our goal is to identify when the runtime bounds from the quantum algorithms of \cite{LH24} are actually polynomial. 

There are three general regimes of growth -- polynomial $O(n^k)$ for fixed $k$, exponential $O(e^{cn})$, and superexponential $O(e^{c n \log n})$. 
While these have been studied in the literature in various contexts, here we will rederive and classify families of partitions exhibiting the above orders of growth for their dimension. We will use as a measure the Durfee square $d(\la)$, i.e.\ the size of the diagonal, of $\lambda$ and the  aft of $\la$ defined as ${\rm aft}(\la):=n-\max\{\la_1,\ell(\la)\}$ and show the classification in Table~\ref{table:growth_order}.
\begin{table}[!h]
\begin{tabular}{|p{1in}|p{1.5in}|p{1.5in}|p{1.5in}|}
\hline 
 & ${\rm aft}(\la)=k$ & $d(\la)=k$ & $d(\la) = \lfloor c \sqrt{n} \rfloor$ \\
\hline
$f^\la\leq $ &  $n^{k}$ & $(2k)^n $ & $c_1^{ n \log(n)}$ \\
\hline
$f^\la \geq$ & $ \binom{n-k}{k} $ & $b^n$, for some $b$ such that $\la_1,\ell(\la) \leq n/b$  & $c_2^{n\log n}$\\
\hline
\end{tabular}
\caption{Order of growth of $f^\la$ depending on the characteristics of the shape.} \label{table:growth_order}
\end{table}

We will now proceed to proving the above classification in several Propositions. The asymptotic behavior of $f^\la$ depends on the regime and how $\la$ grows more precisely, to obtain a leading order asymptotics with the leading constant $\la$'s growth would need to be specified as a limit shape (e.g. how each part $\la_i$ grows with $n$ exactly), see the Examples in this  Section. 

\begin{prop}\label{prop:fixed_aft2}
If ${\rm aft}(\la) =k$ then $f^\la \leq n^k/\sqrt{k!}$ and $f^\la \geq \binom{n-k}{k}$.
\end{prop}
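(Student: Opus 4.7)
The plan is to first reduce to the case $\la_1 \geq \ell(\la)$, since both $f^\la$ and ${\rm aft}(\la)$ are invariant under conjugation $\la \mapsto \la'$. In this case $\la_1 = n-k$ and I write $\la = (n-k,\mu)$ for a partition $\mu \vdash k$ with $\mu_1 \leq n-k$. Both bounds will come from the hook-length formula \eqref{eq:HLF} combined with elementary estimates. A useful preliminary observation is that, for cells in rows $i \geq 2$ of $\la$, one has $h^\la_{(i,j)} = h^\mu_{(i-1,j)}$, so by \eqref{eq:HLF} applied to $\mu$ the product of those hooks equals $k!/f^\mu$.

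For the upper bound I would argue combinatorially: an SYT of shape $\la$ is determined by (i) the $k$-subset $B \subseteq \{1,\ldots,n\}$ that fills the cells in rows $\geq 2$, and (ii) an arrangement of $B$ in $\mu$ compatible with the row-$1$ entries. The complement of $B$ fills the first row in the unique increasing order, and the arrangement of $B$ in $\mu$ is at worst an arbitrary SYT of $\mu$, so the count is at most $f^\mu$. This gives $f^\la \leq \binom{n}{k}\, f^\mu$. The identity $\sum_{\mu \vdash k}(f^\mu)^2 = k!$ yields $f^\mu \leq \sqrt{k!}$, and $\binom{n}{k} \leq n^k/k!$, so $f^\la \leq n^k/\sqrt{k!}$.

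For the lower bound I would apply \eqref{eq:HLF} directly. The first-row hook lengths are $h^\la_{(1,j)} = n-k+1-j+\mu'_j$ for $j \leq \mu_1$ and $h^\la_{(1,j)} = n-k+1-j$ for $\mu_1 < j \leq n-k$, contributing the factor $(n-k-\mu_1)!$. Using only the crude bound $\mu'_j \leq \ell(\mu) \leq k$, I estimate
\[
\prod_{j=1}^{\mu_1}(n-k+1-j+\mu'_j) \;\leq\; \prod_{j=1}^{\mu_1}(n+1-j) \;=\; \frac{n!}{(n-\mu_1)!}.
\]
Substituting into \eqref{eq:HLF} and simplifying gives $f^\la \geq f^\mu \binom{n-\mu_1}{k} \geq \binom{n-k}{k}$, where the final inequality uses $f^\mu \geq 1$, $\mu_1 \leq |\mu| = k$, and monotonicity of $\binom{a}{k}$ in $a$. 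The case $n < 2k$ is vacuous since then $\binom{n-k}{k}=0$.

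The only genuine subtlety, and the step I would treat as the main obstacle, is the lower bound: one has to choose the right estimate for the product $\prod_{j}(n-k+1-j+\mu'_j)$. The crude substitution $\mu'_j \leq k$ looks wasteful, but the $\mu_1$-shift in the intermediate bound $\binom{n-\mu_1}{k}$ is absorbed by monotonicity since $\mu_1 \leq k$, and the $k!$ in the denominator of the binomial is exactly what is produced by the $k!/f^\mu$ factor from the hook product over $\mu$. Consequently no finer estimate on the first-row hooks is actually required.
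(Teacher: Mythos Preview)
Your proof is correct. The upper bound argument is identical to the paper's.

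For the lower bound you take a different route. The paper argues by explicit injection: place $1,\ldots,\mu_1$ in the first $\mu_1$ cells of row~1, choose any $k$-subset of $\{\mu_1+1,\ldots,n\}$ to fill $\mu$ (in $\binom{n-\mu_1}{k}$ ways, each choice yielding a valid SYT since every entry below row~1 exceeds $\mu_1$), and fill $\mu$ in $f^\mu$ ways. You instead bound the hook product directly via \eqref{eq:HLF}, using $h^\la_{(i,j)}=h^\mu_{(i-1,j)}$ for $i\ge 2$ and the crude estimate $\mu'_j\le k$ on the first-row hooks. Both routes land on the same intermediate inequality $f^\la \ge f^\mu\binom{n-\mu_1}{k}$. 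The injection is shorter and more transparent; your HLF computation has the mild advantage that it makes visible exactly where the slack is (the replacement $\mu'_j\mapsto k$), which could be tightened if one wanted a sharper bound.
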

\begin{proof}
Since $f^\la = f^{\la'}$, i.e. the number of SYTs is the same if $\lambda$ is transposed, we can assume that $\la_1 \geq \la'_1=\ell(\la)$.
To obtain the upper bound note that to get an SYT of shape $\la$ we need to choose the $n-k$ entries for the first row, order them in increasing order, and with the rest create an SYT of shape $\mu:=(\la_2,\la_3,\ldots)$. This is an overcount as the ``stitching'' of the two tableaux might violate the increasing columns conditions, so $f^\la \leq \binom{n}{k} f^\mu$. Since $\mu \vdash k$ we have $f^\mu \leq \sqrt{k!}$. As $\binom{n}{k} \leq n^k/k!$ the upper bound follows.

For the lower bound we can create SYTs of shape $\la$ by setting the first $\la_2$ entries of the first row  of $\la$ to be $1,2,\ldots,\la_2$, then selecting the entries in the rest of the first row in $\binom{n-\la_2}{k} \geq \binom{n-k}{k}$ ways and arranging the remaining entries in the remaining tableau of shape $\mu=(\la_2,\la_3,\ldots)$ in $f^\mu$ many ways.  
\end{proof}

The above lower bound is not so good in general. E.g. suppose that $k=n/2$, then the lower bound becomes just the trivial 1. Hence we need a more detailed approach towards understanding when the asymptotic behavior of $f^\la$ is polynomial.

\begin{prop}\label{prop:fixed_aft}
Suppose that $\la\vdash n$ is such that $f^\la \leq n^k $ for a fixed integer $k$ and assume that $n$ is large enough\footnote{for example, take $n>2^{10}k^4$}. Then $\max\{ \la_1 , \ell(\la)\}> n-4k^2$, so we have ${\rm aft}(\la) \leq 4k^2$. 
\end{prop}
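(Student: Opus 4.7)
By conjugation (using $f^\la = f^{\la'}$) we may and do assume $\la_1 \ge \ell(\la)$; note that then $\la_1 \ge \sqrt n$, since $n \le \la_1 \ell(\la) \le \la_1^2$. We prove the contrapositive: assuming $s := n - \la_1 \ge 4k^2$, we will show $f^\la > n^k$ whenever $n \ge 2^{10} k^4$.

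The argument rests on two complementary lower bounds on $f^\la$. The first is the row construction from the proof of Proposition~\ref{prop:fixed_aft2}, which in its sharpest form reads
\[
f^\la \ \ge\ \binom{n-\la_2}{\la_1 - \la_2}\ =\ \binom{n-\la_2}{s}.
\]
The second uses the embedding $(\la_2,\la_2) \subseteq \la$ together with the skew-shape identity $f^\la = \sum_{\mu \subseteq \la,\,|\mu|=m} f^\mu f^{\la/\mu}$ (giving $f^\la \ge f^\mu$ for any $\mu \subseteq \la$), yielding
\[
f^\la \ \ge\ f^{(\la_2,\la_2)} \ =\ C_{\la_2},
\]
the $\la_2$-th Catalan number. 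These two bounds are complementary: the binomial is large when the ``gap'' $\la_1 - \la_2$ is big, while $C_{\la_2}$ is large when $\la_2$ itself is big. We split on the size of $\la_2$.

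\emph{Case 1: $\la_2 \ge 3k\log_2 n$.} For $n \ge 2^{10}k^4$ we have $\la_2 \ge 30k \ge 30$, well into the regime where $C_m \ge 2^m$. Hence $f^\la \ge 2^{\la_2} \ge 2^{3k\log_2 n} = n^{3k} > n^k$.

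\emph{Case 2: $\la_2 < 3k\log_2 n$.} Since $\binom{m}{s}$ is nondecreasing in $m$ for $m \ge s$, and $n-\la_2 > n - 3k\log_2 n$,
\[
f^\la \ \ge\ \binom{n-3k\log_2 n}{s}.
\]
As a function of $s$ this binomial is unimodal with mode near $(n-3k\log_2 n)/2$, so on the admissible range $s \in [4k^2, n - \sqrt n]$ (the upper bound coming from $\la_1 \ge \sqrt n$) its minimum is attained at one of the endpoints. Using $\sqrt n > 4k^2$ (valid for $n\ge 2^{10}k^4$), the binding endpoint is $s = 4k^2$. Applying the estimate $\binom{a}{b}\ge (a/b)^b$ from the preliminaries,
\[
\binom{n-3k\log_2 n}{4k^2}\ \ge\ \left(\frac{n}{8k^2}\right)^{4k^2},
\]
and comparing logarithms, this exceeds $n^k$ as soon as $(4k-1)\log_2 n > 4k\log_2(8k^2)$, which holds comfortably for $n \ge 2^{10}k^4$ and $k\ge 1$. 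At the other endpoint $s = n-\sqrt n$, the complementation $\binom{m}{s}=\binom{m}{m-s}$ turns the bound into one with a small ``other index'' $\sqrt n - 3k\log_2 n$, and the same estimate produces an even larger lower bound.

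The main technical content is the case split and the endpoint calculation in Case 2; the constant $2^{10} k^4$ was chosen to provide slack in the logarithmic inequality above, and the argument would go through with somewhat smaller constants at the cost of tighter estimation.
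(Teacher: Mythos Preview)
Your argument is correct and is a clean reorganization of the paper's proof. Both proofs rest on the same two estimates—the Catalan bound $f^\la\ge C_{\la_2}$ and the first-row binomial bound $f^\la\ge\binom{n-\la_2}{s}$—but you package them differently. The paper first bounds $\ell(\la)\le 2k$ via the principal-hook product $f^\la\ge\prod_i\binom{a_i+b_i}{b_i}$, then bounds $\la_2\le 2k$ in two passes (first $\la_2\le k\log n$ via Catalan, then $\la_2\le 2k$ via $\binom{n-2km}{m}$), and finally multiplies to get ${\rm aft}(\la)\le(\ell-1)\la_2<4k^2$. You instead work directly with the contrapositive, replace the principal-hook step by the elementary observation $\la_1\ge\sqrt n$, and make a single case split on $\la_2$. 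This is genuinely simpler: you never need Frobenius coordinates or the hook product.

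One small imprecision: the sentence ``Using $\sqrt n>4k^2$\ldots the binding endpoint is $s=4k^2$'' is not quite right. By the symmetry $\binom{N}{s}=\binom{N}{N-s}$, the correct comparison is between $4k^2$ and $N-(n-\sqrt n)=\sqrt n-3k\log_2 n$, not $\sqrt n$ itself; for $k=1$ and $n$ just above $2^{10}$ the latter quantity is about $2$, which is \emph{smaller} than $4k^2=4$, so the other endpoint is actually the binding one there. This does not break the proof, since you do treat the endpoint $s=n-\sqrt n$ separately and the value $\binom{N}{\sqrt n-3k\log_2 n}$ is still comfortably $>n^k$ throughout the range $n>2^{10}k^4$ (e.g.\ for $k=1$ one has $\binom{N}{2}\sim n^2/2\gg n$). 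But the phrase ``produces an even larger lower bound'' should be softened to ``also exceeds $n^k$''.
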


\begin{proof}
Assume that $\la_1 \geq \ell(\la)$ by symmetry of $f^\la$ with respect to transposition of $\la$, we will first show that $\ell(\la)\leq 2k$.  Write $\la=(a_1,a_2,\ldots \mid b_1,b_2,\ldots)$ in Frobenius coordinates, that is $a_i = \la_i-i$ and $b_i = \la'_i-i$. Considering the combinatorial definition for $f^\la$ as counting the number of SYTs of shape $\la$, we have that $f^\la$ is bounded below by the product of $f^{\theta^i}$ where $\theta^i = (1+a_i,1^{b_i})$ are the principal hooks of $\la$, since we can just make an SYT from the SYTs for $\theta^1,\ldots$ by shifting the entries in $\theta^i$ by $|\theta^1|+\cdots+|\theta^{i-1}|$. Let $c_i=\min\{a_i,b_i\}$. We have that $f^{\theta^i} = \binom{a_i+b_i}{b_i} \geq 2^{c_i}$ after applying \eqref{ineq:binom1}. Thus 
$$n^k \geq f^\la \geq 2^{c_1+c_2+\cdots}. $$
Consequently, $\ell(\la)-1=c_1 \leq k\log(n)<\sqrt{n}/e$, where the second inequality holds for sufficiently large $n$ (e.g. $n>k^3$). 
Consider again $f^{\theta^1}$, we have
$$n^k \geq \binom{ \la_1 + c_1-1}{c_1} \geq \left(\frac{ \la_1+c_1}{c_1}\right)^{c_1} \geq \left( \frac{n/c_1 +c_1}{c_1 }\right)^{c_1}>\left(\frac{n}{c_1^2}\right)^{c_1}.$$
The function $g(x)= (n/x^2)^x$ is increasing for $x<\sqrt{n}/e$, which encompasses the interval $c_1$ is in. For $x=2k$ we have $g(2k)>n^k$ already for large enough $n$ ($n>16k^2$), and thus we must have $c_1<2k$
 and in particular $\ell(\la)\leq 2k$.

Next, suppose that $\la_2 =m$, and so $(m,m) \subset \la$ and $f^{(m,m)}\leq f^\la$. We have that $f^{(m,m)}=C_m = \frac{1}{m+1}\binom{2m}{m}$ (Catalan number). By the well-known asymptotics $C_m \sim 4^m/(m^{3/2} \pi)$ we see that $C_m \geq 2^m$ for $m\geq 5$. So $n^k \geq f^\la \geq f^{(m,m)} \geq 2^m$ and $m\leq k\log(n)$. 
Since $\la_1 \geq n-(2k-1)\la_2$, we get $\la_1 \geq n - (2k-1)m$ and $\la_1-\la_2 \geq n-2km$. Then $f^\la \geq \binom{ \la_1}{\la_2} \geq \binom{ n-2km}{m}$, since we can create an SYT by putting $1,\ldots,m$ in the beginning of the first row, choose $\la_1-m$ numbers from $m+1,\ldots,\la_1+m$ to be in the rest of the first row, and arrange the remaining numbers in the lower rows to create an SYT. 
Then $n^k \geq \binom{n-2km}{m} \geq \left( \frac{n-2km}{m} \right)^m$, and thus
$$k \log(n) \geq m\log(n) +m \log\left(1 - \frac{2km}{n}\right) -m\log(m) \geq m\log(n) - 2m\log(m),$$
where the last inequality follows since for $2mk<n$ we have $\log(1-2km/n) \geq -\log(m)$ for $m \geq k\log(n)$ and $n$ large enough. Finally, since $m^4< k^4 \log(n)^4< n$ for $n$ large enough, so $4\log(m) \leq \log(n)$, we use it in the last line $k\log(n) \geq m\log(n) -2m \log(m)\geq  m\log(n) - 2m \log(n)/4$ to get  $m\leq 2k$.   This gives $\la_2 + \cdots \leq 2k m \leq 4k^2$ and the result follows. 
\end{proof}

We now consider the case of a fixed Durfee size, i.e. $d(\la) =k$ and invoke the result of~\cite{regev}, which states 
\begin{thm}[\cite{regev}]
Let $d(\la) \leq k$, then $f^\la \leq (2k)^n$. 
\end{thm}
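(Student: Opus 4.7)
The plan is to view $f^\la$ via saturated chains in Young's lattice and bound the branching at each step. An SYT of shape $\la$ corresponds bijectively to a chain $\emptyset = \mu^{(0)} \subsetneq \mu^{(1)} \subsetneq \cdots \subsetneq \mu^{(n)} = \la$ in which each step adds one box. Summing over $\la$ with $d(\la) \leq k$, the total count $\sum_{|\la|=n,\ d(\la) \leq k} f^\la$ equals the number of chains of length $n$ starting at $\emptyset$ and staying inside the sublattice of partitions with Durfee square at most $k$, since the Durfee function is monotone under inclusion: $\mu \subseteq \la$ implies $d(\mu) \leq d(\la)$.

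The key lemma is: any partition $\mu$ with $d(\mu) \leq k$ has at most $2k$ \emph{Durfee-preserving} addable corners, i.e., positions whose addition yields a partition of Durfee still at most $k$. Recall that if $\mu$ has $s$ distinct positive parts, then its total number of addable corners equals $s + 1$. Since the distinct part values of $\mu$ split between rows $1, \ldots, k$ (contributing at most $k$ values) and rows beyond the $k$th (contributing at most $k$ values, all lying in $\{1, \ldots, k\}$), we obtain $s \leq 2k$ and hence at most $2k + 1$ addable corners overall. When this bound is tight ($s = 2k$), forcing no overlap of distinct values across the Durfee boundary, one necessarily has $\mu_k \geq k+1$ and $\mu_{k+1} = k$, which makes the position $(k+1, k+1)$ an addable corner; adding there increases the Durfee square to $k+1$, so excluding this single ``bad'' corner leaves at most $2k$ Durfee-preserving ones.

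Multiplying the per-step bound of $2k$ across all $n$ steps yields $\sum_{|\la|=n,\ d(\la) \leq k} f^\la \leq (2k)^n$, and in particular $f^\la \leq (2k)^n$ for any individual $\la$ with $d(\la) \leq k$.

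The main obstacle is the case analysis in the lemma, specifically verifying that the extremal configuration $s = 2k$ always produces a Durfee-increasing addable corner at $(k+1, k+1)$; this requires careful tracking of how distinct part values accumulate near the Durfee boundary. Once this counting is secured, the rest of the argument is straightforward multiplicative accounting. An alternative route proceeds via the hook length formula $f^\la = n!/\prod_u h_u$ combined with Stirling's asymptotics and a sharp lower bound on $\prod_u h_u$ exploiting the decomposition of $\la$ into Durfee square, arm, and leg.
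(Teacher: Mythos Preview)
The paper does not prove this statement; it merely cites it from Regev and uses it as a black box. Your proof is correct and self-contained. The key lemma---that a partition with $d(\mu)\le k$ has at most $2k$ addable cells whose addition keeps the Durfee square $\le k$---is verified exactly as you outline: the distinct part sizes split into at most $k$ values coming from rows $1,\dots,k$ and at most $k$ values (all in $\{1,\dots,k\}$) from rows beyond the $k$th, so the total number of addable corners is at most $2k+1$; and equality forces $\mu_k\ge k+1$, $\mu_{k+1}=k$, making $(k+1,k+1)$ the unique Durfee-increasing corner. Multiplying the per-step bound along the saturated chain gives $\sum_{d(\la)\le k} f^\la \le (2k)^n$, hence the claim for each individual $\la$.

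For comparison, Regev's original argument is representation-theoretic rather than combinatorial: partitions with $d(\la)\le k$ are precisely those in the $(k,k)$-hook $H(k,k)=\{\la:\la_{k+1}\le k\}$, and the hook Schur--Weyl duality (for the super general linear group $GL(k|k)$ acting on $(\mathbb{C}^{k|k})^{\otimes n}$) yields $\sum_{\la\in H(k,k)} f^\la \cdot \dim V_\la^{(k|k)} = (2k)^n$, from which $f^\la\le(2k)^n$ follows since each super-dimension is a positive integer. Your branching argument is more elementary (no super-algebra needed) and in fact proves the slightly stronger summed inequality $\sum_{d(\la)\le k} f^\la\le(2k)^n$ directly, without the detour through auxiliary module dimensions.
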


As a counterpart and lower bound to this result we invoke~\cite{GM}, slightly rephrased here.

\begin{thm}[\cite{GM}]
Suppose that $\la$ is such that $\la_1, \ell(\la) \leq n/a$ for some $a>1$. Then there exists a real number $b\in (1,a)$, such that $f^\la \geq b^n$. 
\end{thm}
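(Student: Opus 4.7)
The plan is to use the hook length formula $f^\la = n!/\prod_{u\in\la} h(u)$ and obtain an upper bound on $\prod h(u)$ that is of the form $(n/(eb))^n$ up to polynomial factors, so that Stirling $n! \geq \sqrt{2\pi n}(n/e)^n$ gives $f^\la \geq b^n$.

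The main ingredient is AM-GM: $\prod_u h(u) \leq \bar h^n$ where $\bar h := n^{-1}\sum_u h(u)$. I would combine this with the classical identity $\sum_{u\in\la} h(u) = n + b(\la) + b(\la')$, where $b(\mu) := \sum_i (i-1)\mu_i$, and with the convexity bound $b(\mu) \leq |\mu|(\ell(\mu)-1)/2$. The latter is maximized at the balanced partition and follows by a direct rearrangement: if $\la_i \geq \la_{i+1}+2$, moving a box from row $i$ to row $i+1$ (which preserves the partition condition) changes $b$ by $+1$, and iterating terminates at the balanced shape. This gives $\bar h \leq (\la_1+\ell(\la))/2 \leq n/a$ under the hypothesis, hence
\[
f^\la \;\geq\; \frac{n!}{(n/a)^n} \;\geq\; \sqrt{2\pi n}\,(a/e)^n.
\]
For $a > e$, taking $b = a/e \in (1,a)$ completes the proof in this regime.

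For $1 < a \leq e$, the above AM-GM bound is too weak since $a/e \leq 1$. The refinement exploits the strict Jensen gap $\log \bar h - \overline{\log h}$. Hook lengths in any $\la$ with $\la_1,\ell(\la)\leq n/a$ range from $1$ at bottom-right corners up to $\la_1+\ell(\la)-1$, so the relative variance of $h(u)$ is bounded below by a positive constant uniformly in the class. A second-order Taylor expansion of $\log$ around $\bar h$ then gives $\log\bar h - \overline{\log h} \geq c(a) > 0$, improving the bound to $\prod h(u) \leq (n/a)^n e^{-c(a)n}$ and yielding $f^\la \geq (a\,e^{c(a)-1})^n = b^n$ with $b = ae^{c(a)-1}$, which lies strictly in $(1,a)$ provided $c(a) > 1-\ln a$.

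The main obstacle is to verify the inequality $c(a) > 1 - \ln a$ uniformly in $\la$, which requires controlling the hook length distribution across all shapes in the class. A cleaner alternative is to apply the sub-shape inequality $f^\la \geq f^\mu\, f^{\la/\mu}$ (valid for any $\mu \subseteq \la$) with $\mu$ an inscribed rectangle. Here the Durfee square satisfies $d(\la) \geq a/2$, as one sees from the inequality $d^2 - d(\la_1+\ell(\la)) + n \leq 0$ obtained by decomposing $\la$ into its Durfee square and the two strip regions and bounding their sizes. Choosing $\mu = (k)^k \subseteq \la$ with $k = \Theta(a)$ and using the explicit hook length formula $\prod h(u) = \prod_{l=0}^{k-1}(m+l)!/l!$ for $k\times m$ rectangles (giving $f^{k\times m} \sim k^{km}/\mathrm{poly}(n)$) reduces the task to a lower bound on the skew count $f^{\la/\mu}$ via the Aitken--Jacobi--Trudi determinant.
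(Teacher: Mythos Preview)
The paper does not give its own proof of this theorem; it is quoted from~\cite{GM} and used as a black box. So there is nothing to compare against, and I will simply assess your argument on its merits.

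Your AM--GM step is correct and gives $f^\la \geq \sqrt{2\pi n}\,(a/e)^n$, which settles the case $a>e$. The trouble is the range $1<a\le e$, where both of your proposed fixes break down.

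\emph{The Jensen-gap refinement does not close the gap.} You need the uniform inequality $c(a)>1-\ln a$, where $c(a)$ is a lower bound on $\ln\bar h - \overline{\ln h}$ over all admissible $\la$. Take $\la=(p,q)$ with $p=(1-\epsilon)n$, $q=\epsilon n$, so that $\la_1=n/a$ with $a=1/(1-\epsilon)$. A direct computation gives $\bar h \sim \tfrac{(1-\epsilon)^2+\epsilon^2}{2}\,n$ and, via the hook product $(p+1)!\,q!/(p-q+1)$ and Stirling, $\overline{\ln h}=\ln n - H_e(\epsilon)-1+o(1)$ with $H_e(\epsilon)=-\epsilon\ln\epsilon-(1-\epsilon)\ln(1-\epsilon)$. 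For $\epsilon=0.09$ one finds a Jensen gap of about $0.43$, while $1-\ln a\approx 0.91$. So the inequality you need fails outright, and your bound gives $b=ae^{c-1}\approx 0.62<1$, which is vacuous. Since the Jensen gap together with $\bar h$ determines $\overline{\ln h}$ exactly, combining ``$\bar h\le n/a$'' with a gap lower bound cannot do better than the hook-length formula itself---the slack you lose in the crude estimate $\bar h\le n/a$ is precisely what you would need to recover.

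\emph{The Durfee-square alternative is circular.} Your quadratic inequality correctly yields $d(\la)\ge n/(\la_1+\ell(\la))\ge a/2$, but $a$ is a fixed constant, so $d(\la)\ge a/2$ only guarantees a constant-size inscribed square $(k)^k$ with $k=\Theta(a)$. Then $f^{(k)^k}$ is a constant, and the factorization $f^\la\ge f^{(k)^k}\,f^{\la/(k)^k}$ reduces you to a lower bound on $f^{\la/(k)^k}$---which is the original problem for a shape of size $n-O(1)$ still satisfying $\la_1,\ell(\la)\le n/a$. (For instance, $\la=(n/2,n/2)$ has $a=2$ and $d(\la)=2$, so removing the Durfee square buys nothing.) The remark about $f^{k\times m}\sim k^{km}$ for rectangles is not relevant here, since the inscribed rectangle you can guarantee has only $O(1)$ cells.

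In short, the argument is complete only for $a>e$; for $a$ close to $1$ a genuinely different idea is needed. The result in~\cite{GM} proceeds by an inductive/branching argument on the shape rather than by a single global inequality on hook products.
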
 
We immediately have that $\la_1,\ell(\la) <n/b$ also, although this should not be the tight bound. 

If we assume that $\la$'s Frobenius coordinates are $(n/\alpha_1 ,  n/\alpha_2 ,\ldots \mid n/ \beta_1 ,  n/\beta_2,\ldots)$, i.e. $\la_i -i = n/\alpha_i$  and $\la'_i -i=n/\beta_i $ for $i \leq d(\la)$,  then we invoke the results of \cite{MPPa} stating that
\begin{equation}\label{eq:tvk}
\ln f^\la = \sum_i\left( \ln( \alpha_i)/ \alpha_i + \ln(\beta_i)/\beta_i\right)n +o(n). 
\end{equation}
 
 We now consider the case of $d:=d(\la)$ not being constant. Let $\mu = (d^d)$ be the square inside $\la$. Then $f^\la \geq f^\mu$. Using the hook-length formula and approximating it via a Riemann integral we have that 
 $$\ln f^\mu = d^2 \ln(d) + (\ln(4)-3/2)d^2 + o(d^2).$$
 In particular if $d>c \sqrt{n}$, then $\ln f^\la = O( n\ln n)$. Modifying the underlying constants we can thus bound $f^\la$ by $f^\mu$ below and by $\sqrt{n!}$ above obtaining the following, see also~\cite{PPY}.
 \begin{prop}
 Suppose that $d(\la)/\sqrt{n}>c$ for some $c>0$. Then there exist constants $c_1,c_2$, such that 
 $$c_2^{n \log n} \leq f^\la \leq c_1^{n \log n}.$$ 
 \end{prop}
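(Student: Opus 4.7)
The plan is to establish matching upper and lower bounds on $\log f^\la$ of order $n\log n$, both deduced from simple enclosure arguments: the square Young diagram $(d^d)$ sits inside $\la$ and gives the lower bound, while the trivial bound $f^\la\le\sqrt{n!}$ gives the upper bound. All constants can be absorbed into $c_1,c_2$.

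For the upper bound, I would use $f^\la \leq \sqrt{n!}$ (from $\sum_{\la\vdash n}(f^\la)^2=n!$) and apply Stirling's approximation \eqref{ineq:stirling} to get $\log f^\la \leq \tfrac12 n\log n + O(n)$. For $n$ sufficiently large, this is bounded by $c_1^{n\log n}$ for any constant $c_1>\sqrt{2}$ (i.e.\ $\log c_1>1/2$), and the small-$n$ cases can be absorbed by enlarging $c_1$.

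For the lower bound, let $d := d(\la)$ and note $\mu := (d^d)\subseteq \la$, so by the standard fact that inclusion of Young diagrams implies $f^\la\ge f^\mu$ (which follows from the LR/branching rule, or directly by constructing SYTs of shape $\la$ extending a given SYT of shape $\mu$), it suffices to bound $f^{(d^d)}$ from below. The hook-length formula \eqref{eq:HLF} gives
\begin{align*}
\log f^{(d^d)} \;=\; \log(d^2)! \;-\; \sum_{i=1}^{d}\sum_{j=1}^{d} \log\bigl((d-i)+(d-j)+1\bigr).
\end{align*}
The double sum is a Riemann sum for $d^2\int_0^1\!\int_0^1 \log\!\bigl(d(2-x-y)\bigr)\,dx\,dy = d^2\log d + C\cdot d^2$ for an explicit constant $C$; together with Stirling applied to $(d^2)!$ this yields
\begin{align*}
\log f^{(d^d)} \;=\; d^2\log d \;+\; (\log 4 - \tfrac{3}{2})\,d^2 \;+\; o(d^2),
\end{align*}
as quoted in the paragraph preceding the proposition. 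Substituting the hypothesis $d\ge c\sqrt{n}$ gives $d^2\log d \ge c^2 n\cdot\bigl(\tfrac12\log n + \log c\bigr) = \tfrac{c^2}{2}n\log n + O(n)$, and the $O(d^2) = O(n)$ error is negligible against $n\log n$. Hence for $n$ large enough, $\log f^\la \ge \log f^{(d^d)} \ge \tfrac{c^2}{3}n\log n$, which gives $f^\la\ge c_2^{n\log n}$ for any $c_2 < 2^{c^2/3}$; finitely many small $n$ are handled by shrinking $c_2$.

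The only mildly delicate point is controlling the Riemann-sum error for the double sum $\sum_{i,j}\log(2d-i-j+1)$ uniformly in $d$; this is standard since the integrand has only a mild logarithmic singularity near the corner and the discrepancy is $O(d\log d)=o(d^2)$. Everything else is bookkeeping of constants.
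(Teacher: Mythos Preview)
Your proof is correct and follows essentially the same route as the paper: the upper bound via $f^\la\le\sqrt{n!}$ together with Stirling, and the lower bound via $f^\la\ge f^{(d^d)}$ with the hook-length formula evaluated by a Riemann-sum approximation to get $\log f^{(d^d)}=d^2\log d+(\log 4-\tfrac32)d^2+o(d^2)$. The paper's argument is exactly the paragraph preceding the proposition, and you have simply spelled out the constant-bookkeeping in more detail.
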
  
 
 \medskip
 
Next we consider the asymptotics of $\frac{f^\la}{f^\mu f^\nu}$ in the various settings of \cite{LH24}. 

First, suppose that $\la \vdash n, \mu \vdash m, \nu \vdash n-m$ and let $\mu, \nu \subset \la$, this is the case of interest for computing $c^\la_{\mu\nu}$.  If $\ell(\la) =k$ is fixed, then $f^\la=O(c^n)$ and thus $\frac{f^\la}{f^\mu f^\nu} = O(e^{\log(c) n} )$ can be at most exponential. Could it be superexponential? The trivial answer is yes, since we can take $\la$ to be maximal, so $f^\la =O(\sqrt{n!})$ and $\mu=(m), \nu=(n-m)$ have dimensions 1. But then $c^\la_{\mu\nu}=0$ trivially too. 

\begin{prop}
Suppose that $\la \vdash n$, $\mu \vdash m$, $\nu\vdash n-m$ and $c^{\la}_{\mu\nu}>0$. Then $\frac{f^\la}{f^\mu f^\nu}  = O(2^n)$. 
\end{prop}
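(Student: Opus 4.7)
My plan is to use the symmetric-group interpretation of the Littlewood--Richardson coefficients, which translates the problem into a one-line dimension count.

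The key identity is the branching rule for induction from $S_m \times S_{n-m}$ to $S_n$. Namely, the Littlewood--Richardson coefficients $c^\la_{\mu\nu}$ are exactly the multiplicities appearing in
\[
\mathrm{Ind}_{S_m \times S_{n-m}}^{S_n} \bigl(\mathbb{S}_\mu \boxtimes \mathbb{S}_\nu \bigr) \;=\; \bigoplus_{\la \vdash n} \mathbb{S}_\la^{\oplus c^\la_{\mu\nu}}.
\]
This is the standard reformulation of the symmetric function identity $s_\mu(x)\.s_\nu(x) = \sum_\la c^\la_{\mu\nu}\.s_\la(x)$ stated in Section~\ref{ss:mult}, together with Frobenius reciprocity.

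Taking dimensions on both sides, and using that induction from a subgroup of index $\binom{n}{m}$ multiplies the dimension by exactly $\binom{n}{m}$, I get the identity
\[
\binom{n}{m} f^\mu f^\nu \;=\; \sum_{\la \vdash n} c^\la_{\mu\nu}\. f^\la.
\]
Now I would simply observe that whenever $c^\la_{\mu\nu} > 0$, one has $c^\la_{\mu\nu} \geq 1$ (as it is a nonnegative integer), so the sum on the right is bounded below by $f^\la$ alone. This gives
\[
f^\la \;\leq\; \binom{n}{m} f^\mu f^\nu \;\leq\; 2^n f^\mu f^\nu,
\]
using the trivial bound $\binom{n}{m} \leq 2^n$, which is exactly the claim.

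There is no real obstacle: the only nontrivial ingredient is the branching interpretation of $c^\la_{\mu\nu}$, which is classical. It is worth noting that the bound is essentially tight for the ``extremal'' choice $\mu=(m), \nu=(n-m)$ (where $f^\mu=f^\nu=1$ and $c^\la_{\mu\nu}$ counts standard-ish fillings), so the factor $2^n$ cannot be improved to a subexponential function in general without further assumptions on the shapes.
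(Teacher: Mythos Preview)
Your proof is correct and essentially the same as the paper's: both arrive at the identity $\binom{n}{m} f^\mu f^\nu = \sum_{\la} c^\la_{\mu\nu} f^\la$ and conclude by bounding $\binom{n}{m}\le 2^n$. The only cosmetic difference is that the paper derives this identity by extracting the coefficient of $x_1\cdots x_n$ in $s_\mu(x)s_\nu(x)=\sum_\la c^\la_{\mu\nu} s_\la(x)$, whereas you obtain it via the dimension count on the induced representation --- these are two phrasings of the same computation.
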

\begin{proof}
Using the general approach from~\cite{PPY} of expanding using symmetric functions and comparing coefficients we start with the defining identity for the LR coefficients:
$$s_\mu(\bx) s_\nu(\bx) = \sum_\la c^\la_{\mu\nu} s_\la(\bx).$$
Expanding the above in terms of monomials in $\bx$, we consider the coefficient of $x_1\cdots x_n$ on both sides. The coefficient at $x_1\cdots x_n$ at $s_\la(\bx)$ is equal to $K_{\la,1^n}=f^\la$. To get a monomial $x_1\cdots x_n$ at $s_\mu(\bx) s_\nu(\bx)$ we need to get a monomial $x_{i_1}\cdots x_{i_m}$ from $s_\mu$ for some $i_1<\cdots <i_m$ and the remaining $x$ variables should form a monomial $x_{j_1}\cdots x_{j_{n-m}}$ which we get from $s_\nu$. The coefficient at $x_{i_1}\cdots x_{i_m}$ at $s_\mu(\bx)$ is $f^\mu=K_{\mu,1^m}$ for $i_1<i_2<\cdots <i_m$, similarly the coefficient at $x_{j_1}\cdots x_{j_{n-m}} $ at $s_{\nu}(\bx)$ for $j_1<\cdots <j_{n-m}$ is $f^\nu$. The number of ways to choose the indices $i_1<\cdots <i_m$ among  $1,\ldots,n$ is $\binom{n}{m}$. Equating both sides then gives 
$$\binom{n}{m} f^\mu f^\nu = \sum_\la c^\la_{\mu\nu} f^\la,$$
and so if $c^\la_{\mu\nu}>0$ then $\frac{f^\la}{f^\mu f^\nu} \leq \binom{n}{m} \leq 2^n$. 
\end{proof}
Of course, this is only an upper bound. We would be interested to see when the ratio is of order $poly(n)$. Characterizing this completely is beyond the current technology, however we can exhibit families of partitions which could give polynomial growth for that ratio.

\begin{ex}
Suppose that $\la = (n-a,1^a)$, $\mu =(m-b, 1^b)$ and $\nu = (n-m-c,1^c)$ for some integers $a,b,c$. If $m,a,b,c$ grow proportional to $n$ then Stirling's approximation gives 
\begin{align*}
\frac{f^\la}{f^\mu f^\nu} = \frac{ \binom{n-1}{a} }{\binom{m-1}{b} \binom{n-m-1}{c}} \sim \sqrt{  \frac{ 2\pi (n-1) bc(m-1-b)(n-m-1-c)}{a(n-1-a) (m-1)(n-m-1)} } \\ \frac{ \left(\frac{b}{m-1}\right)^b \left( 1-\frac{b}{m-1} \right)^{m-1-b} \left( \frac{c}{n-m-1}\right)^c \left(1-\frac{c}{n-m-1}\right)^{n-m-1-c} }{ (a/(n-1))^a (1-a/(n-1))^{n-1-a} }.
\end{align*}
The above ratio would still be exponential in general. For example, if  $b\sim 1/3 (m-1)$, $c \sim 1/3(n-m-1)$ and $a \sim 1/2 (n-1)$, then the ratio simplifies to  
$ 2 (2^{5/3} /3 )^{n-2}$. However in that case $c^\la_{\mu\nu}=0$. 
For the case of a hook, we have that $c^\la_{\mu\nu}>0$ iff $a-b=c$ or $a-b=c+1$ by the Littlewood--Richardson rule.  
\end{ex} 

\begin{ex}
More generally, suppose that $\la,\mu,\nu$ are in the Thoma--Vershik--Kerov shape limit considered in~\cite{MPPa}, that is when the diagonal of the partition is fixed (or grows at a slower rate) and the lengths of the rows above the diagonal and the columns below the diagonal grow. Set $\la =(n/\al_1,\ldots \mid n/\be_1, \ldots)$ in Frobenius coordinates, $\mu = (m/\gamma_1,\ldots \mid m/\theta_1,\ldots)$, $\nu = ((n-m)/\pi_1,\ldots \mid (n-m)\rho_1,\ldots)$. Let $m=r n$ (treat as $m \approx rn$) for some real number $r<1$. Then by \eqref{eq:tvk} we have
$$\ln\left(\frac{f^\la}{f^\mu f^\nu} \right) = \sum_i  \left( \frac{\ln(\al_i)}{\al_i} + \frac{\ln(\be_i)}{\be_i} - r \frac{\ln(\ga_i)}{\ga_i} - r\frac{ \ln(\theta_i)}{\theta_i} - (1-r) \frac{\ln(\pi_i)}{\pi_i} -(1-r) \frac{\ln(\rho_i)}{\rho_i} \right)n + o(n),$$
and the ratio could be polynomial only if the linear factor vanishes.  This happens for example when $\al_i=\ga_i=\pi_i$ and $\be_i=\theta_i=\rho_i$ for all $i$, i.e. $\mu$ and $\nu$ are proportional to $\la$. 
\end{ex}

\begin{ex}
A more interesting regime is when $f^\la$ is superexponential, e.g. $\la$ has the Vershik--Kerov--Loggan--Shepp shape (see~\cite{MPPa} for the definition and a picture) which is equivalent to $f^\la \sim \sqrt{n!}$. Let $\mu \vdash n-k$ for some fixed $k$ and suppose $\mu$ is also of such shape, so $f^\mu \sim  \sqrt{m!}$. Then
$\frac{f^\la}{f^\mu} = \sqrt{ n(n-1)\ldots (n-k+1)} \leq n^{k/2}$ is polynomial and so would be $f^\la/(f^\mu f^\nu)$. However, in that case $c^\la_{\mu\nu}$ can be computed in polynomial, in fact -- constant time, as the number of possible LR tableaux of shape $\la/\mu$ and type $\nu$ is constant, depending only on the fixed $k$.
\end{ex}

\medskip

We now consider the setting of the Kronecker coefficients when $\la,\mu,\nu \vdash n$. Suppose $f^\la \geq f^\mu \geq f^\nu$. If $g(\la,\mu,\nu)>0$ then we must have $1 \leq \frac{f^\mu f^\nu}{f^\la}\leq f^\nu$. The upper bound for this quantity is easily seen to be $O(\sqrt{n!})$ as all three partitions can be large. Clearly, if $f^\nu =O(\poly(n))$ then the ratio is also polynomial. The converse does not need to hold. 

\begin{ex}
Suppose that $\la,\mu,\nu$ are of the Thoma-Vershik-Kerov shape and have Frobenius coordinates $\la=(n/\al_1,\ldots \mid n/\be_1,\ldots)$, $\mu=(n/\ga_1,\ldots \mid n/\theta_1,\ldots)$, $\nu = n/\pi_1,\ldots \mid n/\rho_i, \ldots)$. Then
$$\log\left(\frac{f^\mu f^\nu}{f^\la} \right) = \left( \sum_i -\frac{\log(\al_i)}{\al_i} - \frac{\log(\be_i)}{\be_i}+ \frac{\log(\ga_i)}{\ga_i} + \frac{ \log(\theta_i)}{\theta_i} + \frac{\log(\pi_i)}{\pi_i} + \frac{\log(\rho_i)}{\rho_i} \right)n + o(n), $$
and we can have subexponential growth as long as the factor at n is 0. For example, let $\mu=\nu=(x n,(1-x)n)$ and $\la =(n/2,n/2)$. Solving $-x\log(x) -(1-x)\log(1-x) = log(2)/2$ we get $x\approx 0.8899$ and the exponential terms disappear.  We can extend this to double hooks (the shape when $d(la)=2$ and the first two rows, first two columns grow linearly in $n$, in Frobenius coordinates this is $(a_1n,a_2n|b_1n,b_2)$) by reflecting the partitions about their diagonals. This makes the first case of triples $(\la,\mu,\nu)$ which is not covered by the general poly-time algorithms. In this case the Kronecker coefficient is polynomially sized by~\cite{PP22} and we believe can also be computed in polynomial time by analyzing the approach obtaining the asymptotics. 
\end{ex}

\begin{ex}\label{ex:rect}
It would be good to exhibit polynomial growth when the partitions do not have fixed diagonal length. Here we will argue that this is possible to achieve without giving the exact shapes due to number theoretic issues. Let $\mu=\nu=(\ell^m)$ where $m=\lfloor \ell^{r} \rfloor$ for some real number $r$ and so $n \sim \ell^{r+1}$. Using the hook length formula, for a partition $\al=(a^b)$ with $ab=n$ we have the leading term asymptotics
\begin{align*}
\log(f^\al) &= n\log(n) -n - \int_0^a \int_0^b \log(x+y) dx dy +o(n) \notag \\
&=   
n\log(n) -n - \frac12(a+b)^2 \log(a+b) +2ab +\frac12 a^2\log(a) +\frac12 b^2\log(b) +o(n)  \\
&= ab(\log(a)+\log(b)-\log(a+b)) -\frac12a^2 \log(1+b/a) - \frac12b^2\log(1+a/b) +ab +o(n) \notag
\end{align*}
If $a,b\sim \sqrt{n}$ then the leading term above is $\frac12 n \log n$. Now let $a=\ell$, $b=\ell^r$, and $n=\ell^{r+1}$ with $r>1$, we have
\begin{align*}
n\left(\log(\ell)+r\log(\ell)-\log(\ell) -\log(1+\ell^{r-1})\right) - \frac12 \ell^2 \log(1+\ell^{r-1}) -\frac12 \ell^{2r} \log(1+\ell^{1-r}) +n +o(n) \\
=n(r\log(\ell) - (r-1)\log(\ell) - O(\ell^{1-r}) ) - \frac12 \ell^2 ((r-1)\log(\ell)+O(\ell^{1-r}))\\ - \frac12 \ell^{2r} (\log(\ell^{1-r}) -O(\ell^{2(1-r)}) ) +n + o(n) 
= n\log(\ell) +O(\ell^2 \log(\ell) ) = \frac{1}{1+r} n\log(n) + o(n \log(n) )
\end{align*}
Thus if $\mu=(\ell_1^{\ell_1^{r_1}})$, $\nu=(\ell_2^{\ell_2^{r_2}})$ and $\la =(\ell^{\ell^r})$, such that $\frac{1}{1+r_1} + \frac{1}{1+r_2} = \frac{1}{1+r}$, then
$$\log\left( \frac{f^\mu f^\nu}{f^\la} \right) = o(n \log(n) ).$$
We'd expect that if $\la$ is close to that rectangle but with more rugged right boundary the ratio would decrease and possibly become polynomial. It is worth noting though that in this case $\ell(\mu)\ell(\nu) = \ell(\la)$ and by standard arguments (see e.g. \cite{IP24}) the rectangular parts can be removed, which would lead to computing the Kronecker coefficient for much smaller partitions. 
\end{ex}

\section{Computing Kostka and Littlewood--Richardson coefficients}\label{sec:kostka}

Here we extend some of the results in~\cite{LH24} on classical algorithms computing Kostka and Littlewood--Richardson coefficients. In~\cite{LH24} a classical algorithm computing $K_{\la\mu}$ is given which runs in time $O(f^\la)$. One way to do this is by generating all SYTs, which can be done dynamically by labeling each possible corner $u$ of $\la$ by $n$ and then proceeding to generate $\la \setminus u$. Then for each SYT we can check if it is the standardization of an SSYT of type $\mu$. That means that the numbers $\mu_1+\cdots+\mu_i+1,\ldots,\mu_1+\cdots+\mu_{i+1}$ appear in this order from left to right in the tableaux (no two in same column), so when we replace them by $i+1$ we will obtain a valid SSYT of type $\mu$. For example if $\la=(4,3,2)$ and $\mu=(3,3,3)$, then
$$\ytableaushort{1236,459,78} \quad \to \quad \ytableaushort{1112,223,33}$$
but the tableau $\ytableaushort{1234,567,89}$ does not standardize to an SSYT of type $\mu=(3,3,3)$ because $4$ appears to right of $5,6$. We then add to the counter 1 only when the SYT standardizes to an SSYT of shape $\mu$. 

However, as we saw in Section~\ref{sec:dimension}, $f^\la$ is very rarely of order $\poly(n)$ and this algorithm is not really efficient. A better algorithm is to use the correspondence between SSYTs and Gelfand-Tsetlin patterns. We consider $K_{\la/\mu,\nu}$ more generally, i.e. the number of SSYTs of type $\nu$ and skew shape $\la/\mu$. Let $ \ell(\mu)=b,\ell(\nu)=c$, $a=b+c$ and assume $\ell(\la)\leq a$ (otherwise $K_{\la/\mu,\nu}=0$ anyway). The corresponding Gelfand-Tsetlin polytope $GT(\la/\mu;\nu)$ consists of the points $\{(x_{ij})\in \mathbb{R}^{ a(a-1)/2-b(b-1)/2}, i=1,\ldots,c+1; j= 1,\ldots, i+b-1\}$ such that 
\begin{align}
  x_{i,j} \geq x_{i,j+1} \, ,& \qquad x_{i,j} \leq x_{i+1,j}  \, , \qquad x_{i,j} \geq x_{i+1,j+1} \\
 x_{1,j} = \mu_j \, ,& \qquad x_{a,j}=\la_j \quad \text{ for all $j$} \\
 \sum_j x_{i+1,j} -& \sum_j x_{i,j} = \nu_i \quad \text{for all $i$}. 
\end{align}
Then $K_{\la/\mu,\nu}$ is equal to the number of integer points in this polytope. Using  Barvinok's algorithm for counting integer points in polytopes of fixed dimensions~\cite{barvinok1994polynomial} we have a poly-time algorithm for $K_{\la/\mu,\nu}$ whenever $\ell(\la), \ell(\nu)$ are fixed.

\begin{rem}\label{rem:kostka_poly}
This efficient algorithm does not apply when the lengths are not constant. In an earlier version of this paper we posed  a conjecture generalizing the result in~\cite{LH24}. 
We conjectured that there exists an algorithm computing $K_{\la/\mu,\nu}$ in time $O(K_{\la/\mu,\nu} \poly(n))$. 
The difficulty to creating such an algorithm lies in the recursive generation of SSYTs. While removing corners to generate SYTs always results in a valid SYT, removing horizontal strips would also generate valid SSYTs, however ensuring that they are of the desired type $\nu$ would often lead to dead ends. Thus such an algorithm would run longer than the number of actually desired SSYTs.
There is one caveat however. In most cases when $\la,\mu,\nu$ don't scale linearly we have that $K_{\la/\mu,\nu}$ is exponentially (or superexponentially) large, see~\cite{PP20,Pan23}. All these coefficients, and multiplicities in question, are easily seen to be computable in exponential time, see the discussion in Sections~\ref{sec:rem_kron} and~\ref{sec:rem_pleth}. So the conjecture is vacuously true in most cases. The interesting cases arise when $K_{\la/\mu,\nu}$ is of polynomial size and $\ell(\la)$ is not fixed. 

However, skew Kostka numbers are special cases of Littlewood--Richardson coefficients $c^\la_{\mu,\nu}$ by a polynomial substitution.  Namely $K_{\la/\mu,\nu} = c^{\theta}_{\al\nu}$ where $\theta_i = \la_1 + \nu_1+\cdots+\nu_{\ell(\nu)+1-i}$ for $i=1,\ldots, \ell(\nu)$ and $\theta_{i+\ell(\nu)} = \la_i$, $\al= (\la_1^\ell(\nu), \mu)$, see e.g.~\cite{Pan23} and the discussion in~\cite{StaMO}.
The Littlewood--Richardson coefficients allow for algorithms with runtime polynomial in the size of the coefficient as discussed below. 
\end{rem}

\smallskip

We now proceed with computation of LR coefficients $c^\la_{\mu\nu}$. As with Kostka, they count integer points in a polytope, in this case it is the LR polytope (see~\cite{PV05}) which can seen as a section of the GT polytope $GT(\la/\mu;\nu)$. Set the skew shape $\theta=\la/\mu$, so that the number of boxes on row $i$ is $\theta_i=\la_i-\mu_i$ and let $k=\ell(\theta)$ (i.e.\ the number of nonzero entries in $(\theta_1,\theta_2,\ldots)$) and in particular $k \leq \ell(\la)$. The polytope $LR(\la/\mu;\nu)$  is given by the set of points $\{a_{ij}\}$ with $a_{ij}\in \mathbb{Z}_{\geq 0}$ and $1\leq j \leq i \leq k$, satisfying the following inequalities
\begin{align}
\sum_{j\leq i} a_{ij} = \theta_i, &\text{ for } i=1,\ldots,k; \label{eq:LR_first} \\
\sum_{i=j}^k a_{ij} = \mu_j,& \text{ for }j=1,\ldots,k; \\
\nu_i+\sum_{j=1}^r a_{ij} \leq \nu_{i-1}+\sum_{j=1}^{r-1} a_{ij}, &\text{ for }i=2,\ldots,k, r=2,\ldots,i\\
\sum_{i=j}^r a_{ij} \leq \sum_{i=j-1}^{r-1} a_{i,j-1}, &\text{ for }j=2,\ldots,k, r=j,\ldots,k,
\end{align}
where $\theta=\la/\nu$ and $\theta_i = \la_i-\nu_i$. 
In particular $c^\la_{\mu\nu}$ can be computed in time $O( \log(|\theta|)^{ \ell(\theta)^2} )$ by Barvinok's algorithm and this gives $O(\poly(n))$ algorithm as long as $\ell(\theta)$ is fixed. 

We now consider other regimes depending on the shapes and sizes of the three partitions.

\begin{prop}\label{prop:LR_compute}
Suppose that $\la \vdash n$, $\mu \vdash k$ and $\nu \vdash n-k$. Then there exists an algorithm which runs in time $O(\log(\la_1)\ell(\la) + 2^k k^{k+2})$ which computes $c^{\la}_{\mu\nu}$.
\end{prop}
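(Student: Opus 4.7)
The plan is to reduce the problem to an enumeration over the $k$-cell skew diagram $\la/\nu$ after a short input-preprocessing pass. First I would read $\la,\mu,\nu$ and check in time $O(\log(\la_1)\ell(\la))$ that $\nu\subset\la$ (i.e.\ $\nu_i\leq\la_i$ for every $i\leq\ell(\la)$, taking $\nu_i=0$ for $i>\ell(\nu)$) and that $|\la|-|\nu|=k$; if either condition fails, return $0$. This accounts for the first term of the claimed running time.

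Next I would extract a compact description of $\la/\nu$ by listing the ``active'' rows $I=\{i:\la_i>\nu_i\}$ together with the pairs $(\nu_i,\la_i)$; since $|\la/\nu|=k$ we have $|I|\leq k$. This data determines $\la/\nu$ as a skew diagram: row $i\in I$ contains the cells $(i,\nu_i+1),\ldots,(i,\la_i)$, and in any column the cells of a skew diagram form a contiguous interval of rows (if $(i,j),(i',j)\in\la/\nu$ with $i<i'$ then $\nu_{i''}\leq\nu_i<j$ and $\la_{i''}\geq\la_{i'}\geq j$ force $(i'',j)\in\la/\nu$ for all $i<i''<i'$), so vertical adjacencies needed for the column-strict SSYT condition can be read off directly from this list. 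With the skew shape in hand I would enumerate all fillings $T:\la/\nu\to\{1,\ldots,\ell(\mu)\}$, of which there are at most $\ell(\mu)^k\leq k^k$, and for each check in $O(k)$ word operations four conditions: (a) weakly increasing rows, (b) strictly increasing columns on vertically adjacent pairs, (c) content equal to $\mu$, and (d) reverse reading word is a lattice (Yamanouchi) word. By the Littlewood--Richardson rule, $c^\la_{\mu\nu}$ equals the number of $T$ passing all four checks, so a single counter updated during the enumeration outputs the answer. The enumeration costs $O(k^{k+1})$ arithmetic operations on $O(\log k)$-bit integers, comfortably within the claimed $O(2^k k^{k+2})$ bound.

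The main conceptual obstacle --- really a bookkeeping issue --- is ensuring that nothing in the combinatorial phase depends on the ambient size of $\la$, since $\la_1$ and $\ell(\la)$ can be much larger than $k$. The compact representation in the second step is exactly what achieves this: after preprocessing, the skew shape is stored using $O(k\log n)$ bits, all row/column adjacency queries can be answered from this representation in time depending only on $k$, and so the cost of the combinatorial enumeration separates cleanly from the cost of merely reading the input, yielding the additive $O(\log(\la_1)\ell(\la)+2^k k^{k+2})$ runtime.
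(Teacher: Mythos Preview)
Your proposal is correct and follows essentially the same approach as the paper: preprocess to extract the $k$-cell skew shape $\la/\nu$ in compact form, then brute-force enumerate all candidate fillings and check the LR conditions, so that the combinatorial phase depends only on~$k$. The only cosmetic difference is that the paper encodes fillings via the contingency arrays $a_{ij}$ (number of $j$'s in row $i$) coming from the hive-polytope description and bounds their count by $\prod_i\binom{\theta_i+i-1}{i-1}\le(2k)^k$, whereas you enumerate cell-by-cell fillings directly with the bound $\ell(\mu)^k\le k^k$.
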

\begin{proof}
Consider the shape $\theta=\la/\nu$, whose rows have length at most $k$. We can remove all empty rows and columns of $\theta$, so that $\theta_i =\la_i-\nu_i \geq 1$ and the rows are connected at vertices of the grid as in $\ydiagram{3+2,3}$ or $\ydiagram{3+2,4}$, but not like $\ydiagram{3+2,2}$. We have $\sum_i \theta_i=k$ and the number of rows is at most $k$. 
We can now generate all Littlewood--Richardson tableaux as follows. 
We thus need to find all arrays of nonnegative integers satisfying the above conditions. While there are more efficient algorithms for counting integer points in polytopes and in particular Littlewood--Richardson coefficients, we can simply list all possible arrays satisfying the first set of conditions~\eqref{eq:LR_first} in $\prod_i \binom{ \theta_i+i-1}{i-1}\leq (2k)^k$ many ways, and check if it satisfies the other conditions in time $O(k^2)$.  
\end{proof}

\begin{prop}\label{prop:multiLR_compute}
Let $\la \vdash n$ and let $\al^i$ be partitions, such that $\al^1 \vdash n-k$, and $|\al^2|+\cdots+|\al^r|=k$ for some $k$. Define the multi-LR coefficient $c^{\la}_{\al^1 \ldots \al^r} := \langle s_{\al^1}\cdots s_{\al^r},s_{\la} \rangle$, i.e.\ as the coefficient at $s_\la$ in the expansion of $s_{\al^1}\cdots s_{\al^r}$ in the Schur basis. Then the value of $c^{\la}_{\al^1\ldots \al^r}$ can be computed in time $O(\ell(\la_1) \log(\la_1) k^{2k^2})$.
\end{prop}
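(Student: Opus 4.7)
The plan is to reduce the multi-Littlewood--Richardson coefficient to a sum of products of ordinary LR coefficients via iterated Schur expansion, and then apply Proposition~\ref{prop:LR_compute} to each factor.

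First, I would iteratively expand $s_{\al^1} s_{\al^2} = \sum_{\mu^{(2)}} c^{\mu^{(2)}}_{\al^1\al^2} s_{\mu^{(2)}}$, then multiply by $s_{\al^3}$, and so on. Extracting the coefficient of $s_\la$ gives
\begin{align*}
c^{\la}_{\al^1 \cdots \al^r} \;=\; \sum_{\al^1 = \mu^{(1)} \subset \mu^{(2)} \subset \cdots \subset \mu^{(r)} = \la} \; \prod_{i=2}^{r} c^{\mu^{(i)}}_{\mu^{(i-1)} \, \al^i},
\end{align*}
where the sum runs over chains of partitions with $|\mu^{(i)}| = |\al^1| + \sum_{j=2}^{i} |\al^j|$. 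Each skew shape $\mu^{(i)}/\mu^{(i-1)}$ lies in $\la/\al^1$, and since $|\la/\al^1| = k$ we have in particular $r - 1 \leq k$.

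Second, I would enumerate the chains. A chain is uniquely encoded by labelling each of the $k$ cells of $\la/\al^1$ by the index $i \in \{2,\ldots,r\}$ at which that cell is adjoined. There are at most $(r-1)^k \leq k^k$ such labellings, and in $O(k^2)$ time per labelling I can verify that every intermediate $\mu^{(i)}$ is an actual partition and record (in compressed form) the rows where it differs from $\al^1$. This step is precisely where the hypothesis $|\la/\al^1|=k$ is exploited: the intermediate partitions differ from $\al^1$ only within the $\leq k$ rows supporting $\la/\al^1$, so their enumeration depends only on $k$, not on $\ell(\la)$ or $\la_1$.

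Third, for each valid chain I would compute every factor $c^{\mu^{(i)}}_{\mu^{(i-1)} \al^i}$ by invoking Proposition~\ref{prop:LR_compute} with parameter $k_i := |\al^i| \leq k$. Each such call runs in time $O\!\left(\log(\la_1)\ell(\la) + 2^{k_i} k_i^{k_i+2}\right)$, and there are at most $k$ factors per chain. Summing over chains and including the initial input read-in, the total runtime is bounded by
\begin{align*}
k^{k} \cdot k \cdot \bigl(\log(\la_1)\ell(\la) + 2^k k^{k+2}\bigr) \;=\; O\!\left(\ell(\la)\log(\la_1)\, k^{2k^2}\right),
\end{align*}
matching the claim with slack. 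The main obstacle is to bound the size of the chain enumeration by a function of $k$ alone; once the observation about $|\la/\al^1| = k$ is in place, the inner LR computations from Proposition~\ref{prop:LR_compute} assemble directly into the desired runtime.
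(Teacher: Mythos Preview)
Your argument is correct, but it proceeds by a genuinely different decomposition than the paper's. The paper first peels off $\al^1$ via the identity
\[
c^{\la}_{\al^1\cdots \al^r} \;=\; \sum_{\mu \vdash k} c^{\la}_{\al^1 \mu}\, c^{\mu}_{\al^2 \cdots \al^r},
\]
so that after one application of Proposition~\ref{prop:LR_compute} the remaining computation $c^{\mu}_{\al^2\cdots\al^r}$ lives entirely among partitions of size $\leq k$; it then iterates over chains $\mu \supset \beta^2 \supset \beta^3 \supset \cdots$ of small partitions, bounding their number by $\prod_i p(b_i)$. Your route instead keeps the large partition throughout: you build chains $\al^1=\mu^{(1)}\subset\cdots\subset\mu^{(r)}=\la$ and encode them as labelings of the $k$ cells of $\la/\al^1$, which is a clean combinatorial parametrization that makes the $k^k$ bound immediate. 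The paper's approach has the minor advantage that all but one LR call involve only partitions of size $\leq k$, while yours has the advantage of a single uniform loop and a more transparent enumeration. Both land at the same $O(\ell(\la)\log(\la_1)\,k^{2k^2})$ bound.

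One small point to make explicit: your chain formula and cell-labeling presuppose $\al^1\subset\la$. If $\al^1\not\subset\la$ then $c^\la_{\al^1\cdots\al^r}=0$, and this containment can be tested in the initial $O(\ell(\la)\log(\la_1))$ read; you should say so before invoking the skew shape $\la/\al^1$.
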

\begin{proof}
First we have the following recurrence relation
\begin{align}
c^{\la}_{\al^1\ldots \al^r} = \sum_{\mu \vdash k} c^{\la}_{\al^1 \mu} c^{\mu}_{\al^2 \ldots \al^r},
\end{align}
which follows from the recursion
$$s_{\al^1}\cdots s_{\al^r} = s_{\al^1} s_{\al^2}\cdots s_{\al^r} = s_{\al^1}\left( \sum_{\mu} c^{\mu}_{\al^2\ldots \al^r}s_\mu \right)=\sum_\mu c^{\mu}_{\al^2\ldots \al^r} s_{\al^1}s_\mu = \sum_\mu \sum_\la c^{\mu}_{\al^2\ldots \al^r} c^{\la}_{\al^1\mu} s_\la.$$
The number of partitions of $k$, denoted by $p(k)$, is smaller than $e^{\pi \sqrt{2/3} \sqrt{k}}$ by the famous Hardy-Ramanujan formula, and those partitions can be dynamically generated in time $p(k)$ via various recursions. For each such $\mu$ we compute $c^\la_{\mu \al^1}$ in time $O(\log(\la_1)\ell(\la) + 2^k k^{k+2})$ by Proposition~\ref{prop:LR_compute}. Next, let $a_i=|\al^i|$ and $b_i = a_{i+1}+\cdots$. Then we compute $c^{\mu}_{\al^2\cdots \al^r}$ iteratively as
$$c^{\mu}_{\al^2\cdots \al^r} = \sum_{\beta^i \vdash b_i, i=2,\ldots,r-1} c^{\mu}_{\al^2 \beta^2} c^{\beta^2}_{\al^3 \beta^3}\cdots c^{\beta^{r-1}}_{\beta^r \al^r}.$$
Generating the partitions $\beta^i$ would take $p(b_2)\cdots p(b_r) < e^{\pi \sqrt{2/3} \sqrt{k} r}$ many steps, and computing each LR coefficient would cost, by Proposition~\ref{prop:LR_compute} at most $O(2^k k^{k+2})$ where we bound each $a_i, b_i \leq k$. Thus the multi-LR would take at most $O(e^{\pi \sqrt{2/3} \sqrt{k}  r }2^{k(r-1)} k^{(r-1)(k+2)})$.  We have $r \leq k+1$ because the $\al^i$'s are nonempty so $r-1 \leq |\al^2|+\cdots|\al^r|=k$. Multiplying the runtimes and replacing $r$ by $k+1$ gives the upper  bound $O(\ell(\la_1)\log(\la_1)  e^{\pi \sqrt{2/3} k\sqrt{k}}  2^{k^2} k^{(k+1)(k+2)}  )$ which we can bound by the simpler $O(\ell(\la_1) \log(\la_1) k^{2k^2})$ (for $k\geq 5$). 
\end{proof}

\begin{prop}\label{prop:LR_kostka}
Let $\la \vdash n$, $\mu,\nu$ be partitions with $|\mu|+|\nu|=n$. Suppose there is an algorithm running in time $t$ which generates all SSYTs of shape $\la/\mu$ and type $\nu$. Then there exists a classical algorithm which computes $c^\la_{\mu\nu}$ in time $O(t n^2)$.
\end{prop}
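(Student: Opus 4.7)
The plan is to invoke the classical Littlewood--Richardson rule, which characterizes $c^\la_{\mu\nu}$ as the number of \emph{LR tableaux} of shape $\la/\mu$ and content $\nu$: these are the SSYTs of shape $\la/\mu$ and type $\nu$ whose reverse reading word $w_T$ (read right-to-left along rows, top-to-bottom) is \emph{Yamanouchi}, i.e. in every prefix of $w_T$ the number of occurrences of $i$ is at least the number of occurrences of $i+1$, for every $i\geq 1$. Since the given algorithm already enumerates the correct shape and type, I only need to filter its output by the Yamanouchi condition and count the survivors.

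Concretely, I would first run the assumed enumeration algorithm to produce the full list $\mathcal T$ of SSYTs of shape $\la/\mu$ and content $\nu$; this costs time $t$, and in particular $|\mathcal T|\leq t$ since each tableau contributes at least one elementary step to the enumeration. Then I would process the elements of $\mathcal T$ one at a time (no need to store them all simultaneously, which keeps the overhead linear). For each $T\in \mathcal T$, I would form the reverse reading word $w_T\in \{1,\ldots,\ell(\nu)\}^n$ and perform a single left-to-right scan maintaining the vector of running multiplicities $(c_1,\ldots,c_{\ell(\nu)})$. Upon reading a letter $j$ I increment $c_j$ and check that $c_{j-1}\geq c_j$; if this ever fails, I discard $T$, otherwise at the end of the scan I increment a global counter. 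The scan takes $O(n)$ arithmetic, but each comparison is between integers of size $O(n)$, so to be safe I budget $O(n^2)$ per tableau.

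Summing over all $T\in \mathcal T$ gives a total running time of $O(t) + O(|\mathcal T|\cdot n^2) = O(tn^2)$, and by the LR rule the final counter value equals $c^\la_{\mu\nu}$. The only mild subtlety is ensuring that the canonical order used to read out each tableau, and the incremental bookkeeping of the count vector, are implementable in linear time per tableau; this is routine and contributes only constant factors. Consequently there is no substantive obstacle — the proposition is essentially an immediate reduction from computing the LR coefficient to enumerating SSYTs of the prescribed shape and type, with a polynomial filtering overhead. The content of the statement lies not in the proof itself but in the fact that, combined with Proposition~\ref{prop:LR_compute} and the preceding discussion on generating SSYTs via Gelfand--Tsetlin patterns, it reduces the question of efficient computation of $c^\la_{\mu\nu}$ to the question of efficient enumeration of SSYTs of a given shape and type.
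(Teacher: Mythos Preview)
Your proof is correct and follows essentially the same approach as the paper: invoke the classical Littlewood--Richardson rule, enumerate all SSYTs of shape $\la/\mu$ and type $\nu$ using the assumed algorithm, and for each one check in $O(n^2)$ time whether its reverse reading word is a ballot sequence. The paper's argument is identical in substance, differing only in that it does not spell out the bound $|\mathcal T|\leq t$ or the bit-complexity justification for the per-tableau cost.
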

\begin{proof}
The algorithm generates all SSYT of shape $\la/\mu$ and type $\nu$. 
The classical LR rule states that $c^\la_{\mu\nu}$ is equal to the number of SSYTs of shape $\la/\mu$, type $\nu$ and whose reading word is a lattice permutation. To obtain the reading word we record the entries of the tableaux starting from the top right corner, reading to the left entry by entry and moving to the next row. The resulting word should be a lattice permutation/ballot sequence, i.e.\ for every prefix and every $i$ the number of entries $i$ is $\geq$ the number of entries $i+1$ in that prefix.
For example, $\ytableaushort{ \none\none 111, \none 222,133}$ is an LR tableaux of shape $(5,4,3)/(2,1)$, type $(4,3,2)$ and the reading word is $111222331$.

To compute the LR coefficient we generate all SSYTs of shape $\la/\mu$, type $\mu$ in time $t$. 
For each such SSYT  we obtain the reading word in time $O(n)$ and then dynamically check if it is a ballot sequence in time $O(n^2)$. This subroutine can be optimized depending on the encoding (recording a tableaux by the number of entries equal to $j$ in row $i$) and improved to $O(\log(n) \log(\nu_1)\ell(\nu))$. 
\end{proof}

\begin{cor}
The LR coefficient $c^\la_{\mu\nu}$ can be computed in time $O(f^{\la/\mu}n^3)$.
\end{cor}

\begin{proof}
Generate all SYTs of shape $\la/\mu$ in time $O(f^{\la/\mu})$ by the recursive removal of corner boxes. For each such SYT, apply the standartization of type $\mu$ when possible in time $O(n)$. This generates the SSYTs of shape $\la/\mu$ in time $t=f^{\la/\mu}n$. Now apply Proposition~\ref{prop:LR_kostka}.
\end{proof}

In the same spirit as with the Kostka coefficients we reiterate the conjecture of~\cite{LH24}.

\begin{conj}[\cite{LH24}]\label{conj:lr}
Let $\la$, $\mu$, $\nu$ be partitions such that $|\la|=|\mu|+|\nu|$. Then there exists a classical algorithm running in time $O(\frac{f^\la}{f^\mu f^\nu} \poly(n))$ which computes $c^\la_{\mu\nu}$.
\end{conj}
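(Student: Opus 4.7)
The plan is to enumerate all Littlewood--Richardson tableaux of shape $\la/\mu$ and content $\nu$ in amortized polynomial time per tableau, and to combine this with the upper bound $c^\la_{\mu\nu} \leq f^\la/(f^\mu f^\nu)$ that comes for free from the branching identity
\begin{align*}
f^\la \;=\; \sum_{\mu' \vdash m,\, \nu' \vdash n-m} c^\la_{\mu' \nu'}\, f^{\mu'} f^{\nu'}
\end{align*}
for $\mathbb{S}_\la \!\downarrow^{S_n}_{S_m \times S_{n-m}}$ (with $m=|\mu|$). Since every term on the right is nonnegative, $c^\la_{\mu\nu} f^\mu f^\nu \leq f^\la$, and it suffices to produce an enumeration running in time $O(c^\la_{\mu\nu}\,\poly(n))$.

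I would encode each LR tableau as a chain of partitions $\mu = \mu^{(0)} \subset \mu^{(1)} \subset \cdots \subset \mu^{(r)} = \la$ with $r=\ell(\nu)$, where each skew shape $\mu^{(i)}/\mu^{(i-1)}$ is a horizontal strip of size $\nu_i$ and the whole chain satisfies the Yamanouchi/lattice condition. The enumeration is a depth-first search through valid prefixes: at step $i$, given $\mu^{(0)} \subset \cdots \subset \mu^{(i)}$, iterate over candidate extensions $\mu^{(i+1)}$ and recurse only on those for which the residual coefficient $c^{\la}_{\mu^{(i+1)},(\nu_{i+2},\ldots,\nu_r)}$ is strictly positive. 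The single external black box required is a $\poly(n)$-time tester for $c^\la_{\mu\nu}>0$, which is available via the Knutson--Tao saturation theorem together with the hive polytope and linear programming feasibility (de Loera--McAllister; Mulmuley--Narayanan--Sohoni); this is precisely the ingredient that lets a classical algorithm compete with the quantum one without enlarging the runtime past $f^\la/(f^\mu f^\nu)$.

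Because every explored node is a prefix of at least one genuine LR tableau, the search tree has at most $r\cdot c^\la_{\mu\nu} \leq n\cdot c^\la_{\mu\nu}$ internal nodes; multiplying by the $\poly(n)$ cost per visit gives total runtime $O(c^\la_{\mu\nu}\,\poly(n)) \leq O\bigl((f^\la/(f^\mu f^\nu))\,\poly(n)\bigr)$, matching the quantum bound up to polynomial factors.

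The main obstacle is bounding the out-degree of the search tree: the set of horizontal strips of size $\nu_{i+1}$ fitting into $\la/\mu^{(i)}$ can itself be exponentially large, so naively iterating over candidate strips fails. I would handle this by refining the step ``choose a strip'' into a nested box-by-box inner search that commits to one cell at a time in reverse reading order, again calling the positivity oracle to discard dead-end partial strips; this keeps the effective branching factor polynomial while preserving the invariant that every node visited extends to at least one complete LR tableau. An alternative is to abandon the tableau model and enumerate the integer points of the Knutson--Tao hive polytope by a depth-first walk through nested facet-restricted subpolytopes, with LP feasibility as the pruning subroutine. Either implementation should yield the claimed runtime and hence the conjecture; the delicate point in both is verifying that the pruning really is tight enough that the total work telescopes to $c^\la_{\mu\nu}\cdot\poly(n)$ rather than blowing up to $f^{\la/\mu}\cdot\poly(n)$ as in the earlier proposition.
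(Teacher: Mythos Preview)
This statement is a \emph{conjecture} in the paper, restated from~\cite{LH24}; the paper explicitly leaves it open and offers no proof, so there is nothing in the paper to compare your argument against. What you have written is an attack on an open problem, and the ``delicate point'' you flag at the end is in fact the entire difficulty.

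Your central claim---that every node surviving the two tests (Yamanouchi on the partial filling, plus positivity of the residual $c^{\la}_{\mu^{(i+1)},(\nu_{i+2},\ldots,\nu_r)}$) is a prefix of at least one genuine LR tableau---is false. The residual positivity check depends only on $\mu^{(i+1)}$ and forgets the history $\mu^{(0)},\ldots,\mu^{(i)}$, yet the Yamanouchi constraint comparing the letters $i{+}1$ and $i{+}2$ in any completion depends on \emph{where} the $i{+}1$'s sit, i.e.\ on $\mu^{(i+1)}/\mu^{(i)}$, not merely on $\mu^{(i+1)}$. Concretely, take $\la=(3,2,1)$, $\mu=(1)$, $\nu=(2,2,1)$, so $c^\la_{\mu\nu}=1$. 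The chain $(1)\subset(2,1)\subset(2,2,1)$ places $1$'s at $(1,2),(2,1)$ and $2$'s at $(2,2),(3,1)$; the reverse reading word $1,2,1,2$ is ballot and $c^{(3,2,1)}_{(2,2,1),(1)}=1>0$, so both of your tests pass. But the unique completion puts a $3$ at $(1,3)$, giving reading word $3,1,2,1,2$, which fails the ballot condition at its first letter. Hence your tree contains nodes that are prefixes of \emph{no} LR tableau, and the bound $|\text{tree}|\le r\cdot c^\la_{\mu\nu}$ does not follow from your pruning. Nothing in your argument prevents such dead branches from outnumbering the live ones by more than a polynomial factor.

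The hive alternative has the same gap in a different guise. LP feasibility on a coordinate slice of the hive polytope detects a \emph{rational} point; concluding there is an \emph{integer} point in that slice requires saturation for the slice, and Knutson--Tao only gives saturation for the full hive polytope with its boundary fixed, not for the intermediate slices your DFS would visit. Proving that stronger saturation (equivalently, exhibiting a genuine $\poly(n)$ oracle for ``this partial LR filling extends'') is exactly the missing ingredient, and is why the paper records this as a conjecture rather than a theorem.
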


In light of the dimension discussion in Section~\ref{sec:dimension}, we note that most cases of triples would give exponentially large ratios. It is easy to see that computing the LR coefficients would take at most exponential time (e.g. via characters). When $\ell(\la)$ is fixed we also have polynomial time algorithms, as it is equivalent to counting integer points in the fixed dimension hive polytope. Thus the interesting cases need the dimension ratio to be $\poly(n)$-large and $\ell(\lambda)$ growing, and we have yet to identify nontrivial such examples (as discussed in Section~\ref{sec:dimension}).

We now  pose a stronger conjecture which implies the previous one since $c^\la_{\mu\nu} \leq \frac{f^\la}{f^\mu f^\nu}$. 
\begin{conj}\label{conj:lr2}
Let $\la$, $\mu$, $\nu$ be partitions such that $|\la|=|\mu|+|\nu|$. Then there exists a classical algorithm running time $O(c^\la_{\mu\nu} \poly(n))$ which computes $c^\la_{\mu\nu}$ when $c^\la_{\mu,\nu}>0$.
\end{conj}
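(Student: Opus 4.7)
The plan is to design a backtracking enumeration algorithm which outputs the Littlewood-Richardson tableaux of shape $\la/\mu$ and type $\nu$ one at a time with $\poly(n)$ delay between consecutive outputs, so that the total runtime is $O(c^\la_{\mu\nu}\poly(n))$. Fix a canonical linear order on the cells of the skew diagram $\la/\mu$ (for instance the reading order used in the LR rule: top to bottom, right to left within each row). At each node of the recursion the algorithm maintains a \emph{partial LR tableau} $T$, i.e.\ a filling of an initial segment of cells with entries in $\{1,\ldots,\ell(\nu)\}$ which (i) is an SSYT on its filled sub-shape, (ii) uses at most $\nu_i$ copies of each $i$, and (iii) whose reading word is a ballot prefix. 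For the next unfilled cell the algorithm tries each candidate entry $v$ in increasing order, discards those violating (i)--(iii), and before recursing invokes an \emph{extendability oracle} which decides whether the resulting partial tableau admits at least one completion to a full LR tableau of shape $\la/\mu$ and type $\nu$.

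The efficiency of the scheme rests on realising the extendability oracle in $\poly(n)$ time. I would encode the set of completions of $T$ as the integer points of the rational polytope $P_T$ obtained by intersecting the hive polytope (or the equivalent Berenstein-Zelevinsky polytope realising $c^\la_{\mu\nu}$ as integer points in a polytope) with the affine subspace enforcing the partial filling $T$. Real non-emptiness of $P_T$ is an LP feasibility question decidable in polynomial time. The nontrivial claim is that real non-emptiness implies existence of an integer point, a saturation-type property which would then supply the oracle. Since the search tree has depth at most $|\la/\mu|\leq n$, and every node with a positive oracle answer eventually produces an output, the per-output overhead is $\poly(n)$, giving the total $O(c^\la_{\mu\nu}\poly(n))$.

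The main obstacle is precisely this saturation-type statement for $P_T$: the Knutson-Tao saturation theorem is proved for the full hive polytope with prescribed boundary data $(\la,\mu,\nu)$, not for faces cut out by fixing interior coordinates that record a partial LR tableau. One would need to establish a version of saturation for such faces, or replace the LP oracle by a purely combinatorial test. A plausible combinatorial alternative uses Belkale's recursive characterisation of LR-positivity via the Horn inequalities: at each extendability call one reduces to a bounded collection of LR-positivity questions on strictly smaller triples $(\la',\mu',\nu')$, which can be tested recursively and cached to keep the overhead polynomial. As a secondary approach one may attempt to navigate the set of LR tableaux by combinatorial moves (jeu-de-taquin slides, crystal operators, or growth-diagram local transformations) and argue polynomial diameter plus polynomial-time neighbour computation, bypassing extendability altogether; this would require proving connectivity and sufficient expansion of the chosen graph, which is itself a non-trivial combinatorial question.
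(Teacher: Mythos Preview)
The statement you are addressing is posed in the paper as an \emph{open conjecture}; the paper does not supply a proof but only remarks that it strengthens the preceding Conjecture~\ref{conj:lr} since $c^\la_{\mu\nu}\le f^\la/(f^\mu f^\nu)$, and notes that the interesting regime is when $c^\la_{\mu\nu}$ is polynomially bounded while $\ell(\la)$ is unbounded. So there is no ``paper's own proof'' to compare against.

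As a proof strategy, your proposal has a genuine gap that you yourself identify correctly: the extendability oracle hinges on a saturation statement for the polytope $P_T$ obtained from the hive polytope by fixing some interior coordinates, and no such statement is known. Knutson--Tao saturation concerns the full hive polytope with prescribed boundary $(\la,\mu,\nu)$; slicing by further equalities corresponding to a partial filling destroys the structure that underlies the honeycomb proof, and there is no reason to expect that a real point in $P_T$ forces an integer point. Without this, the LP-based oracle can return ``feasible'' on branches that contain no LR tableau, and the backtracking tree may blow up far beyond $c^\la_{\mu\nu}\cdot\poly(n)$. Your fallback via Belkale/Horn recursion does not address this either: that machinery certifies positivity of $c^{\la'}_{\mu'\nu'}$ for triples, not extendability of a \emph{partial} tableau with a prescribed prefix, and it is unclear how to reduce the latter to a bounded collection of the former. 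The crystal/jeu-de-taquin connectivity idea is a separate open-ended problem; connectivity of the relevant graph is known in some models, but polynomial diameter with polynomial-time local moves sufficient to enumerate with polynomial delay is not established.

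In short, what you have written is a reasonable research outline toward the conjecture, with the key missing ingredient (face-saturation or an equivalent combinatorial extendability test) clearly flagged, but it is not a proof.
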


Again, as with the Kostka numbers, the interesting regime here is when $c^\la_{\mu\nu}$ is $\poly(n)$ large (as opposed to exponential, which is the usual case) and $\ell(\la)$ is not fixed. In almost all other cases the conjecture is vacuously true. 

As observed by Maxim van den Berg (personal communication), in most cases the conjectures above follow directly from the result of Ikenmeyer~\cite{ikenmeyer2016small}.

\begin{thm}[\cite{ikenmeyer2016small}]
The Littlewood--Richardson coefficient $c^\la_{\mu\nu}$ can be computed in time $O((c^\la_{\mu\nu})^2 \poly(n))$.
\end{thm}

In particular, if $c^\la_{\mu\nu}$ is of size $\poly(n)$ then the LR coefficient can be computed in polynomial time, which shows that Conjecture~\ref{conj:lr2} is true. This leaves a very small range for possible improvements where the LR are superpolynomial, but also subexponential.

\section{Kronecker coefficients}\label{sec:kron}

We now turn towards efficient algorithms for the Kronecker coefficients for triples $(\la,\mu,\nu)$ in various regimes.

\begin{prop}\label{prop:kron_k}
Let $\la,\mu,\nu \vdash n$ and set $k=n-\nu_1$, i.e.\ ${\rm aft}(\nu)=k$. Then
$g(\la,\mu,\nu)$ can be computed in time $O(( \ell(\la)\log(\la_1)+\log(\mu_1)\ell(\mu)) \min\{\ell(\la),\ell(\mu)\}^k  k^{2k^2+2k})$.
\end{prop}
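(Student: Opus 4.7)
The plan is to combine a Jacobi--Trudi expansion of $s_\nu$ which isolates its large first row with a Frobenius-reciprocity identity that reduces $\langle s_\la * s_\mu,\, s_{(n-j)} s_\gamma \rangle$ to a sum of products of Littlewood--Richardson coefficients for $\la, \mu$ and small Kronecker coefficients on partitions of size $\leq k$. The LR contributions are then handled via Proposition~\ref{prop:LR_compute}, and the small Kronecker contributions depend only on $k$, so all $k$-only factors can be absorbed into $k^{2k^2+2k}$.

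Write $\nu = (n-k, \bar\nu)$ with $\bar\nu \vdash k$ and expand $s_\nu = \det[h_{\nu_i - i + j}]_{i,j=1}^{\ell(\nu)}$ along the first row. Only the first row carries $h$'s of large index $h_{n-k+j-1}$, while every $(1,j)$-cofactor is an integer polynomial in $h_a$ with $a \leq 2k$ of total degree $k - j + 1 \leq k$, hence expressible in the Schur basis via $h_\beta = \sum_\pi K_{\pi\beta} s_\pi$. Reindexing by $j' = k - j + 1$ this yields
\[
  s_\nu \;=\; \sum_{j=0}^{k} \sum_{\gamma \vdash j} c_{j,\gamma} \, s_{(n-j)} \, s_\gamma,
\]
where the integer coefficients $c_{j,\gamma}$ depend only on $\bar\nu$ and are computable in $\poly(k)$ time (each $M_j$ is a determinant of size $\leq k$ whose expansion contributes at most $k!$ monomials in the $h_a$ with $a \leq 2k$, and each such monomial expands into the Schur basis through Kostka numbers on partitions of size $\leq k$). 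Next, by Frobenius reciprocity applied to the restriction $\mathbb{S}_\la \otimes \mathbb{S}_\mu$ from $S_n$ to $S_{n-j} \times S_j$, together with $g((n-j), \eta', \eta'') = \delta_{\eta', \eta''}$ (since $(n-j)$ indexes the trivial $S_{n-j}$-character), one has
\[
  \bigl\langle s_\la * s_\mu,\, s_{(n-j)} s_\gamma \bigr\rangle \;=\; \sum_{\eta \vdash n-j} \sum_{\la^2, \mu^2 \vdash j} c^\la_{\eta, \la^2}\, c^\mu_{\eta, \mu^2}\, g(\gamma, \la^2, \mu^2).
\]
Substituting gives a master formula for $g(\la, \mu, \nu)$ as a sum over tuples $(j, \gamma, \eta, \la^2, \mu^2)$ of the product $c_{j,\gamma}\, c^\la_{\eta, \la^2}\, c^\mu_{\eta, \mu^2}\, g(\gamma, \la^2, \mu^2)$.

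For the runtime, a nonzero term requires $\eta \subseteq \la$ and $\eta \subseteq \mu$ with $|\la/\eta| = |\mu/\eta| = j \leq k$; writing $\eta_i = \la_i - a_i$ with $a_i \geq 0$ and $\sum a_i = j$, inequality~\eqref{ineq:binom1} bounds the number of such $\eta$ by $\binom{j + \ell - 1}{j} \leq \min\{\ell(\la), \ell(\mu)\}^k$. Each LR coefficient $c^\la_{\eta, \la^2}$ (with $\la^2 \vdash j \leq k$) is computable in time $O(\log(\la_1)\ell(\la) + 2^k k^{k+2})$ by Proposition~\ref{prop:LR_compute}, and similarly $c^\mu_{\eta, \mu^2}$ in time $O(\log(\mu_1)\ell(\mu) + 2^k k^{k+2})$; the small Kronecker coefficient $g(\gamma, \la^2, \mu^2)$ on partitions of size $\leq k$ is computable in $k^{O(k)}$ time, e.g.\ from the $S_j$ character table via Murnaghan--Nakayama and Frobenius. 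The outer sums over $(j, \gamma, \la^2, \mu^2)$ contribute at most $k \cdot p(k)^3 \leq k^{O(\sqrt k)}$ terms. Multiplying and absorbing all purely $k$-dependent factors into $k^{2k^2 + 2k}$ yields the stated runtime.

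The main subtleties I anticipate are (i) verifying that the Jacobi--Trudi cofactors indeed produce a finite, $\poly(k)$-computable list of $(j,\gamma)$-contributions once one adopts the standard convention $h_a = 0$ for $a < 0$, and (ii) organizing the enumeration of $\eta \subseteq \la \cap \mu$ so that it is traversed without duplication in the claimed $\min\{\ell(\la),\ell(\mu)\}^k$ bound. Both are bookkeeping issues once the master identity is in place; the core ingredients (Jacobi--Trudi, Frobenius reciprocity, and the triviality of $g((n-j),\cdot,\cdot)$) are classical.
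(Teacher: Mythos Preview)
Your argument is correct and reaches the same runtime bound, but it follows a genuinely different route from the paper. The paper expands the full Jacobi--Trudi determinant of $s_\nu$ as a signed sum over $S_{\ell(\nu)}$ and then applies the Cauchy-type identity $h_m[xy]=\sum_{\alpha\vdash m}s_\alpha(x)s_\alpha(y)$ to each factor, arriving at
\[
g(\la,\mu,\nu)=\sum_{\sigma}\sgn(\sigma)\sum_{\alpha^i\vdash \nu_i+\sigma_i-i} c^{\la}_{\alpha^1,\ldots,\alpha^\ell}\,c^{\mu}_{\alpha^1,\ldots,\alpha^\ell},
\]
so that the work is pushed into \emph{multi}-LR coefficients handled by Proposition~\ref{prop:multiLR_compute}. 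You instead expand the Jacobi--Trudi determinant by cofactors along the first row to write $s_\nu=\sum_{j,\gamma}c_{j,\gamma}\,s_{(n-j)}s_\gamma$, and then invoke the restriction/Frobenius-reciprocity identity $\langle s_\la*s_\mu,\,s_\alpha s_\beta\rangle=\sum c^{\la}_{\la^1\la^2}c^{\mu}_{\mu^1\mu^2}g(\alpha,\la^1,\mu^1)g(\beta,\la^2,\mu^2)$ together with $g((n-j),\cdot,\cdot)=\delta$. This trades the multi-LR machinery for ordinary binary LR coefficients (so only Proposition~\ref{prop:LR_compute} is needed) at the cost of an extra small Kronecker factor $g(\gamma,\la^2,\mu^2)$ on partitions of size at most $k$, which is harmless. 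Your approach is closer in spirit to Murnaghan's reduced Kronecker formalism; the paper's is a purely symmetric-function computation that avoids any auxiliary Kronecker evaluation. Both enumerate the ``large'' partition ($\alpha^1$ in the paper, $\eta$ for you) in the same way, giving the $\min\{\ell(\la),\ell(\mu)\}^k$ factor.

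One small correction: the cofactor coefficients $c_{j,\gamma}$ are not computable in $\poly(k)$ time as you wrote, since the $(\ell-1)\times(\ell-1)$ minor already has up to $(\ell-1)!\le k!$ monomials; the correct bound is $k^{O(k)}$. This is immaterial for the final estimate, as it is still absorbed into $k^{2k^2+2k}$.
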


\begin{proof}
Let $\ell:=\ell(\nu)$, then by the Jacobi--Trudy identity and the identification of $h_m=s_{(m)}$ -- the character of the trivial representation, we have
\begin{align*}
s_\nu[\bx\by] = \sum_{\sigma \in S_{\ell} } \sgn(\sigma) \prod_i h_{\nu_i+\sigma_i-i} [xy] \\
= \sum_{\sigma \in S_{\ell} } \sgn(\sigma) \sum_{\al^i \vdash \nu_i+\sigma_i-i} \prod s_{\al^i}(x) s_{\al^i}(y). 
\end{align*} 
Here we used the notation $f[\bx\by]:=f(x_1y_1,x_1y_2,\ldots,x_2y_1,x_2y_2,\ldots)$, i.e.\ substituting all products of pairs as variables into $f$. We also used that $g( (m), \la,\mu)=1$ if $\la=\mu$ and 0 otherwise. 
Expanding both sides in terms $s_\al(x)s_\be(y)$, using~\eqref{eq:kron_def} for the left side, and comparing coefficients we have
\begin{align}
g(\la,\mu,\nu) = \sum_{\sigma \in S_{\ell} } \sgn(\sigma) \sum_{\al^i \vdash \nu_i+\sigma_i-i} c^\la_{\al^1,\ldots,\al^\ell} c^\mu_{\al^1,\ldots,\al^\ell}.
\end{align}
Note that in the formula above we have $\al^1 \vdash \nu_1 + \sigma_1-1 \geq n-k$, and so $|\al^2|+\cdots+|\al^\ell| \leq k$. Thus, for each choice of $\al^1,\ldots,\al^\ell$ we can compute the product of the two multi-LR coefficients in time $O( (\ell(\la)  \log(\la_1)+ \ell(\mu) \log(\mu_1) ) k^{2k^2})$ by Proposition~\ref{prop:multiLR_compute}. We can generate each of the $\al^2,\ldots,\al^\ell$ in $p(k)^{\ell-1}$ steps. Finally, we can choose $\al^1$  as follows. First, observe that we are only considering $\al^1 \subset \la\cap \mu$, otherwise the multi-LR coefficient would be 0. Assume that $\ell(\la)\leq \ell(\mu)$. So we can determine $\al^1$ via the values  $\la_i - \al^1_i \leq k$ for each $i=1,\ldots,\ell(\la)$.  For the nonzero such differences we choose a strong composition $(c_1,\ldots,c_r)$ of $k -\sigma_1+1$ in at most $\binom{k-1}{r-1}\leq k^r$ many steps. We choose a set $I=\{i_1,\ldots,i_r\} \subset \{1,\ldots,\ell(\la)\}$ of size $r$ in $\binom{\ell(\la)}{r} \leq \ell(\la)^r$ many ways and set $\al^1_{i_j}:=\la_{i_j}-c_j$ and $\al_i=\la_i$ for $i \not \in I$. We accept $\al^1$ if $\al^1_i\geq \al^1_{i+1}$ for all $i$ which can be checked in $\ell(\la)$ steps. Altogether generating $\al^1$ takes at most $O(\sum_{r=1}^k k^r \ell(\la)^r \ell(\la))=O(k^k\ell(\la)^k \ell(\la))$ steps.
Iterating over all $\ell!$ permutations $\sigma$ and repeating the above procedures gives the Kronecker coefficient.

The runtime of generating all $\al$ tuples involved  is thus $O(   \ell! k^k\ell(\la)^{k+1}p(k)^{\ell-1})=O(k^{2k}\ell(\la)^k)$, where we bound $\ell \leq k$ and $p(k) \sqrt{k}e^{-k} \leq 1$ from the asymptotic approximations~\eqref{eq:pn} and use Stirling's formula. Multiplying by the times it takes to compute each multi-LR gives the desired bound. 
\end{proof}

\begin{proof}[Proof of Theorem~\ref{thm:kron1}]
Suppose that $f^\nu \leq n^k$. By Proposition~\ref{prop:fixed_aft} we have that ${\rm aft}(\nu)\leq 4k^2$. Since $g(\la,\mu,\nu)=g(\la,\mu',\nu')$ we can assume $\ell(\nu)\leq \nu_1$, otherwise we repeat the argument with $\nu'$ and $\mu'$ instead. Set $m:=4k^2$, then $\nu_1 \geq n-m$ and we can apply Proposition~\ref{prop:kron_k}, so $g(\la,\mu,\nu)$ can be computed in time $ O(( \ell(\la_1)\log(\la_1)+\log(\mu_1)\ell(\mu_1)) \min\{\ell(\la),\ell(\mu)\}^m  m^{2m^2+2m})$. We have $\ell(\la_1) \leq n$, $\log(\la_1) \leq \log(n)$ and this gives the desired bound of $D(k) n^{4k^2+1}\log(n)$.
\end{proof}

\section{Plethysm coefficients}\label{sec:pleth}

We will start with one of the more ubiquitous\footnote{Ubiquitous because it is the most basic one and plays a special role in Geometric Complexity Theory, see e.g.~\cite{IP17}.} cases of plethysm coefficients: $s_{(d)} [s_{(m)}]$. If we are looking at the coefficient for $s_\la$, then it suffices to use only $\ell(\la)=k$ many variables. We have $s_{(d)}=h_d$, $s_{(m)}=h_m$. We expand 
$$h_d(y_1,\ldots) = \sum_r \sum_{ \substack{ c_1+c_2+\cdots+c_r =d \\ c_i>0}} \quad \sum_{i_1<i_2<\cdots<i_r} y_{i_1}^{c_1}\cdots y_{i_r}^{c_r}$$
and 
$$h_m(x_1,\ldots,x_k) = \sum_{ b_1+\cdots+b_k=m} x_1^{b_1}\cdots x_k^{b_k},$$
where $(b_1,\ldots,b_k)$ go over all compositions of $m$ allowing parts to be 0. Order these compositions in lexicographic order, so that $b<b'$ if there is an index $j$, such that $b_1=b_1',\ldots,b_{j-1}=b_{j-1}'$ and $b_j<b_j'$. This gives a total ordering of these compositions and we can list them as $b^1=(0,\ldots,0,m)<b^2=(0,\ldots,1,m-1)<b^3<\ldots$.  Expanding $h_m$ into a sum of monomials ordered by that lexicographic ordering and then substituting into the expression for $h_d$ above we get
\begin{align}\label{eq:pleth_expansion}
h_d[h_m(x_1,\ldots,x_k)] = \sum_{r\leq d} \sum_{ \substack{ c_1+\cdots+c_r =d\\ c_i>0} } \quad \sum_{i_1<i_2<\cdots<i_r} \bx^{c_1\cdot b^{i_1}}\cdots \bx^{c_r \cdot b^{i_r}},
\end{align}
where $\bx^{v}:=x_1^{v_1} x_2^{v_2}\cdots$ for any sequence $v=(v_1,v_2,\ldots)$.
Partition the space of possible $b^i$s into the following $k^{r-1}$ many polytopes for each $r=1,\ldots,d$. Let $\bar{j}:=(j_1,\ldots,j_r) \in [k]^{r-1}$ and define the set of points $P(\bar{j}) \subset \{ (b_l^i) \in \mathbb{R}^{rk}_{\geq 0}: i=1,\ldots,r; l=1,\ldots,k)$ as 
\begin{align}
\sum_{l=1}^k b_l^i =m, & \quad \text{ for } i=1,\ldots,r;\\
b^i_l =b^{i+1}_l, & \quad\text{ for }l=1,\ldots,j_i-1, \text{ for each } i=1,\ldots,r-1; \\
b^i_{j_i} < b^{i+1}_{j_i},  & \quad \text{ for }i=1,\ldots,r-1.
\end{align}
Then compositions $b^1, \ldots, b^r$ are ordered in increasing lexicographic order if and only if they belong to one of $P(\bar{j})$. Since $P(\bar{x}) \cap P(\bar{y})=\emptyset$ if $\bar{x} \neq \bar{y}$, we have that $P(\bar{j})$ partition the space of $b^1<b^2<\cdots<b^r$ as $\bar{j} \in [k]^{r-1}$. 
We can thus write
$$
h_d[h_m] = \sum_{r=1}^d \sum_{c: \substack{ c_1+\cdots+c_r=d\\ c_i>0 }} \sum_{\bar{j} \in [k]^{r-1}} \sum_{\bbb \in P(\bar{j})}   x^{c_1b^1+\cdots+c_rb^r},
$$
where the second sum is over strong compositions $c$ of $d$. 

We are now ready to expand into Schur functions. We have that $s_\la(x_1,\ldots,x_k) = \frac{ a_{\la+\delta(k)}(x_1,\ldots,x_k) }{\Delta(x_1,\ldots,x_k)}$, and  multiplying both sides of equation~\ref{eq:pleth_def} by $\Delta$, identifying $h_d=s_{(d)}$ and $h_{m}=s_{(m)}$, we obtain 
$$\Delta(x_1,\ldots,x_k) h_d[h_m(x_1,\ldots,x_k)] = \sum_{\la \vdash dm} a^\la_{d,m} a_{\la+\delta(k)}(x_1,\ldots,x_k).$$
We can extract the plethysm coefficient by extracting the monomial $\bx^{\la+\delta(k)}$ on each side (i.e.\ expanding each side into multivariate monomials and comparing the coefficients on both at the monomial $x_1^{\la_1+k-1}\cdots x_k^{\la_k}$). Denote by $[\bx^v]f(x_1,\ldots,x_k)$ the coefficient at the monomial $x_1^{v_1}\cdots x_k^{v_k}$ in the expansion of $f(x_1,\ldots,x_k)$ into monomials. , expanding $\Delta(x_1,\ldots,x_k) = \prod_{i<j}(x_i-x_j) = \sum_{\sigma \in S_k} \sgn(\sigma) \bx^{\sigma(\delta(k))}$, we get
\begin{align}\label{eq:pleth1}
a^\la_{d,m} &= [\bx^{\la+\delta(k)}] \Delta(x_1,\ldots,x_k) h_d[h_m(x_1,\ldots,x_k)] \notag \\
&= [\bx^{\la +\delta(k)}] \sum_{\sigma \in S_k} \sgn(\sigma) \bx^{\sigma(\delta(k))}  \sum_{r=1}^d \sum_{c: c_1+\cdots+c_r=d, c_i>0} \sum_{\bar{j} \in [k]^{r-1}} \sum_{b \in P(\bar{\bbj})}   \bx^{c_1b^1+\cdots+c_rb^r} \notag \\
&= \notag \sum_{\sigma \in S_k} \sgn(\sigma)  \sum_{r=1}^d \sum_{c: c_1+\cdots+c_r=d, c_i>0} \sum_{\bar{\bbj} \in [k]^{r-1}} \sum_{b \in P(\bar{j})}   \mathbf{1}[ c_1b^1+\cdots c_rb^r +\sigma(\delta(k)) = \la +\delta(k)] \\
&= \sum_{\sigma \in S_k} \sgn(\sigma)  \sum_{r=1}^d \sum_{c: c_1+\cdots+c_r=d, c_i>0}  \sum_{\bar{j} \in [k]^{r-1}}  |\{ b \in P(\bar{j}); c_1b^1+\cdots c_rb^r = \la +\delta(k) -\sigma(\delta(k)) \}|,
\end{align}
 where the last cardinality counts the number of integer points $b$ in $P(\bar{j})$.
We now define the polytope $Q(\bar{j}, c, \alpha)$ as the set of points $(b^i_j) \in P(\bar{j}) \cap \{ (b^i_j) \, | \, c_1b^1+\cdots +c_rb^r = \alpha\}$. So $Q(\bar{j},c,\alpha) \subset \mathbb{R}^{rk}$. Equation~\eqref{eq:pleth1} then gives
\begin{align}\label{eq:pleth_formula}
a^\la_{d,m} = \sum_{\sigma \in S_k} \sgn(\sigma) \sum_{r=1}^d \sum_{\substack{ c_1+\cdots+c_r=d\\ c_i>0}} \sum_{\bar{j} \in [k]^{r-1}} |Q(\bar{j},c,\la+\delta(k)-\sigma(\delta(k)))|
\end{align}

\begin{prop}\label{prop:pleth_gen}
Let $k,d,m$ be integers and $\lambda \vdash dm$ so that $\ell(\la)=k$. Then $a^\la_{d,m}$ can be computed in time $O(k! (2k)^{d-1} (dm)^{dk}\log(dk))$.  
\end{prop}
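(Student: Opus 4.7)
The plan is to apply formula~\eqref{eq:pleth_formula} directly and bound the work done in each of the nested sums. Since the formula is already at our disposal, there is essentially no further algebra: the proof reduces to a careful counting of the number of summands and the cost of evaluating each cardinality $|Q(\bar j, c, \la+\delta(k)-\sigma(\delta(k)))|$.

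First, I would bound the sizes of the outer index sets. The permutation $\sigma$ ranges over $S_k$, contributing a factor $k!$. For fixed $r \in \{1, \dots, d\}$, the number of strong compositions $(c_1, \dots, c_r)$ of $d$ is $\binom{d-1}{r-1}$, and summing over $r$ yields $\sum_{r=1}^d \binom{d-1}{r-1} = 2^{d-1}$. For each such $r$, the tuple $\bar j \in [k]^{r-1}$ contributes at most $k^{r-1} \leq k^{d-1}$ possibilities. Multiplying these together gives at most $k! \cdot 2^{d-1} \cdot k^{d-1} = k! (2k)^{d-1}$ triples $(\sigma, c, \bar j)$.

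Next, for each such triple, I would compute the cardinality $|Q(\bar j, c, \alpha)|$ with $\alpha = \la + \delta(k) - \sigma(\delta(k))$ by direct enumeration. Recall $Q(\bar j, c, \alpha) \subset \mathbb{R}^{rk}$ is cut out from $P(\bar j)$ by the linear equation $c_1 b^1 + \cdots + c_r b^r = \alpha$; all its integer points have entries $b^i_l \in \{0, 1, \dots, m\}$ satisfying $\sum_l b^i_l = m$. The number of compositions of $m$ into $k$ nonnegative parts is $\binom{m+k-1}{k-1} \leq m^{k-1}$ by~\eqref{ineq:binom1}, so the number of candidate tuples $(b^1, \dots, b^r)$ is at most $m^{r(k-1)} \leq (dm)^{dk}$. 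For each candidate, verifying the defining inequalities of $P(\bar j)$ and the equality $c_1 b^1 + \cdots + c_r b^r = \alpha$ takes $O(rk) = O(dk)$ time, which is absorbed into the enumeration bound.

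Multiplying the number of outer terms by the enumeration cost yields the stated runtime $O(k! (2k)^{d-1} (dm)^{dk})$. I expect no genuine obstacle here, as \eqref{eq:pleth_formula} has already done the representation-theoretic work; the only subtlety is confirming that the naive enumeration over $(b^1, \dots, b^r)$ is indeed the dominant cost, and that finer per-candidate arithmetic (addition and comparison of $O(k)$-dimensional integer vectors with entries bounded by $dm$) contributes only polylogarithmic overhead which is hidden in the $O(\cdot)$.
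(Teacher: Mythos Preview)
Your argument is correct and structurally identical to the paper's: apply formula~\eqref{eq:pleth_formula}, count the triples $(\sigma,c,\bar j)$ as at most $k!\cdot 2^{d-1}\cdot k^{d-1}=k!(2k)^{d-1}$, and bound the cost of each term $|Q(\bar j,c,\alpha)|$ by $O((dm)^{dk})$. The one substantive difference is in that last step. The paper invokes Barvinok's algorithm for counting integer points in a polytope of dimension at most $dk$ with entry bounds $\leq dm$, whereas you simply enumerate all candidate tuples $(b^1,\ldots,b^r)$ of compositions of $m$ and check the constraints of $Q$ directly. Your route is more elementary and self-contained, and it hits the stated bound on the nose; the paper's appeal to Barvinok is in fact overkill, since Barvinok runs in time polynomial in $\log(dm)$ for fixed dimension and would yield a sharper bound than the one claimed. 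Either method suffices for the proposition as stated. One tiny quibble: the inequality~\eqref{ineq:binom1} actually gives $\binom{m+k-1}{k-1}\leq (m+1)^{k-1}$ rather than $m^{k-1}$, but this off-by-one is harmless for the final $O((dm)^{dk})$ bound.
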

\begin{proof}
We apply formula~\eqref{eq:pleth_formula}. For each polytope $Q(\bar{j},c,\la+\delta(k) - \sigma(\delta))$ we have that the input bounds for the polytope are bounded by $m, \la_1+k \leq md$. The dimension of the polytope is at most $dk$ since $r \leq d$, and so by Barvinok's algorithm the number of integer points can be found in time $O((dm)^{dk}\log(dk))$. The total number of polytopes we consider  is the number of triples $(\sigma, c, \bar{j})$. The total number of compositions $c$ is $2^{d-1}$, there are  $k^{r-1} \leq k^{d-1}$ many vectors $\bar{j}$, and so the total number of polytopes is $\leq k! 2^{d-1} k^{d-1}$. Multiplying these bounds gives the total runtime. 
\end{proof}

The above formula gives an efficient algorithm when $k$ and $d$ are fixed.

The formulas can be used in different regimes. If ${\rm aft}(\la)=K$ is fixed, then as $n$ and $d$ grow we have that the plethysm coefficient stabilizes as soon as $d$ becomes larger than ${\rm aft}(\la)$, see e.g.~\cite{brion1993stable,colmenarejo2017stability}. This implies that $a^\la_{d,m} = a^\mu_{K,m}$ where $\mu = \la -(d-K)$, and then computing $a^\mu_{K,m}$ can be done in $\poly(n-d+K)$ time by Proposition~\ref{prop:pleth_gen}. For the sake of self-containment we provide a detailed proof.

\begin{prop}\label{prop:pleth_aft}
Let $\la \vdash n$ with ${\rm aft}(\la)=K$ with $\la_1 \geq \ell(\la)$ and $d,m$ be such that $dm=n$. Then $a^\la_{d,m}$ can be computed in time $O(n^{4K^3(K+1)}\log(dK))$.
\end{prop}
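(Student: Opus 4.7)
The plan is to reduce $a^\la_{d,m}$, via the plethysm stability theorem of Brion~\cite{brion1993stable} and its effective refinement of Colmenarejo~\cite{colmenarejo2017stability}, to an instance of Proposition~\ref{prop:pleth_gen} in which both the degree $d$ and the length $k=\ell(\mu)$ of the partition are bounded by a function of $K$ alone. Assume $\la_1 \geq \ell(\la)$, so that writing $\la=(\la_1,\bar\la)$ with $\bar\la=(\la_2,\la_3,\ldots)$ we have $\la_1 = n-K$, $|\bar\la|=K$, and consequently $\ell(\la)\leq K+1$. The tall case $\ell(\la)>\la_1$ is symmetric and reduced analogously.

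The first case is $d\leq K$. Here we apply Proposition~\ref{prop:pleth_gen} directly to $a^\la_{d,m}$ with $k=\ell(\la)\leq K+1$ and the given $d$, obtaining a runtime $O(k!\,(2k)^{d-1}(dm)^{dk})=O(n^{K(K+1)})$ after absorbing constants that depend only on $K$. The second case is $d>K$. Here the stability theorem asserts that the sequence $a^{(Dm-K,\,\bar\la)}_{D,m}$ is constant once $D$ is past a threshold polynomial in $K$; iterating the one-step stabilization from $d$ down to $K$ yields
\begin{equation*}
a^{\la}_{d,m} \;=\; a^{\mu}_{K,m}, \qquad \mu := (Km-K,\,\bar\la)\;\vdash\;Km,
\end{equation*}
with $\ell(\mu)\leq K+1$ and $|\mu|=Km\leq n$. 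Applying Proposition~\ref{prop:pleth_gen} to $a^{\mu}_{K,m}$ then yields runtime $O((K+1)!\,(2K+2)^{K-1}(Km)^{K(K+1)})=O(n^{K(K+1)})$, which is comfortably subsumed by $O(n^{4K^3(K+1)})$; the slack in the claimed exponent absorbs any polynomial-in-$K$ overhead coming from the stabilization threshold and from the Barvinok-type constants hidden inside Proposition~\ref{prop:pleth_gen}.

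The main obstacle is to cite the stability result in a form whose effective threshold is polynomial in $K$ and that is valid for all admissible $m$, together with verifying that the reduced partition $\mu=(Km-K,\bar\la)$ still satisfies $\mu_1\geq\ell(\mu)$ so that ${\rm aft}(\mu)=K$ (this holds as soon as $m\geq 2$; the degenerate case $m=1$ is trivial since $h_d[h_1]=h_d$ and $a^\la_{d,1}=\delta_{\la,(d)}$). Once the stability input is in place, the combinatorial reduction to $a^\mu_{K,m}$ and the substitution into Proposition~\ref{prop:pleth_gen} are routine, and the bound $O(n^{4K^3(K+1)})$ follows immediately.
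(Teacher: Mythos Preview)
Your approach is correct and is precisely the route the paper sketches in the paragraph immediately preceding Proposition~\ref{prop:pleth_aft}: use the Brion/Colmenarejo stability $a^{\la}_{d,m}=a^{\mu}_{K,m}$ for $d\ge K$, then feed $a^{\mu}_{K,m}$ into Proposition~\ref{prop:pleth_gen}. The paper explicitly notes this works and then says ``for the sake of self-containment we provide a detailed proof'', opting instead to re-derive the stabilization from the polytope formula~\eqref{eq:pleth_formula}. Concretely, the paper's argument shows directly that for $d>4K^3$ the constraints $c_1b^1_j+\cdots+c_rb^r_j=\la_j+\delta_j-\sigma(\delta)_j$ force $r\le 2K^2+1$, $c_i\le 2K$ for $i<r$, and $b^r=(m,0,\ldots,0)$; stripping off the $c_rb^r$ contribution yields the reduced formula~\eqref{eq:pleth_reduced}, whose polytopes live in dimension $\le 4K^3(K+1)$. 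So the paper effectively re-proves the one-step stability inside the integer-point formula, whereas you import it wholesale. Your version is shorter and in fact yields the sharper exponent $K(K+1)$ rather than $4K^3(K+1)$; the paper's version buys independence from the cited stability literature.

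One genuine slip: your sentence ``the tall case $\ell(\la)>\la_1$ is symmetric and reduced analogously'' is not justified. There is no involution on $h_d[h_m]$ sending $s_\la\mapsto s_{\la'}$ (applying $\omega$ produces $e_d[\cdot]$ or $h_d[e_m]$, not $h_d[h_m]$), so the tall case does not reduce by conjugation. That said, the paper's own proof also silently assumes $\la_1\ge\ell(\la)$ when it writes $\ell(\la)\le K+1$, and Theorem~\ref{thm:pleth} carries this hypothesis explicitly, so the proposition is evidently intended under that standing assumption. Simply drop the tall-case remark rather than claim a symmetry that is not there.
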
 
\begin{proof}
First, if $d \leq 4K^3$ then the result follows from Proposition~\ref{prop:pleth_gen} since $\ell(\la) \leq K+1$. Now assume that $d >4K^3$. 

Consider formula~\ref{eq:pleth_formula} and, to avoid notational issues, let $k$ be the number of variables.  The set of points $(b^i_j)$ in the polytopes $Q(\bar{j},c,\lambda+\delta(k)-\sigma(\delta(k)))$ satisfy the equations 
\begin{align}\label{eq:pleth_constraints}
c_1b^1_j+\cdots +c_r b^r_j = \la_j+ k -j - \sigma(k -j).
\end{align}
Since $\ell(\la) \leq {\rm aft}(\la)+1 =K+1$, so $\la_j=0$ for $j>K+1$, we must have $\sigma(k-j)=k-j$ for all of these values, else there will be negative numbers. We can thus assume that $k = K+1$ and is fixed. Next, since $c_i >0$ in equation~\ref{eq:pleth_constraints} we need to have at most $\la_j + \ell-j-\sigma(\ell -j) <2K$ many nonzero terms for each $j\geq 2$, so $\#\{ i: b^i_j >0\} <2K$ for each $j\geq 2$. Thus the total number of nonzero entries in the $(b^i_j)$'s for $j \geq 2$ is at most $(k-1)2K \leq 2K^2$. Since the vectors $b^i$ are all compositions of $m$, and are supposed to be distinct, there is at most one vector with $b^i_j=0$ for $j\geq 2$ (that is, $(m,0,\ldots,0)$) and by the lexicographic ordering this should be the largest one, $b^r$.  Thus the total number of vectors distinct from $(m,0,\ldots)$ should be at most $2K^2$, and consequently $r \leq 2K^2+1$. Further, if a vector $b^i$ is such that $b^i_j >0$ for some $j \geq 2$, then since $c_i b^i_j \leq \la_j + k-j - \sigma(k-j) \leq 2K$, we must have $c_i \leq 2K$ for all $i$, such that $b^i_j >0$ for some $j\geq 2$. 

If we had that for all $i=1,\ldots,r$ we have $b^i_j >0$ for some $j\geq 2$  then 
$$d = \sum_i c_i \leq r 2K \leq 2K\cdot 2K^2,$$
which contradicts the assumption on $d$. We must thus have $b^r = (m,0,\ldots)$ and $c_r \geq d - 4K^3$. 
Formula~\eqref{eq:pleth_formula} can be rewritten with those constraints in mind, namely we can iterate over $r \leq 4K^3+1$ with $c_i \leq 2K$ for $i=1,\ldots,r-1$, solve for $c_r = d-c_1-\cdots-c_{r-1}$ and subtract the terms $c_r b^r = (c_rm,0,\ldots)$ from the polytopal constraints to obtain $\hat{\la}= \la - c_rb_r$ and removing the last vector $b^r$ altogether, leaving us with
\begin{align}\label{eq:pleth_reduced}
a^\la_{d,m} = \sum_{\sigma \in S_{K+1} } \sgn(\sigma) \sum_{r=1}^{4K^3+1} \sum_{ (c_1,\ldots,c_{r-1}) \in [1,2K]^{r-1}} \sum_{\bar{j} \in[K+1]^{r-2}} |Q(\bar{j}, c, \hat{\la} +\delta(K) -\sigma(\delta))|
\end{align}

By an analogous proof to the one of Proposition~\ref{prop:pleth_gen} we have the desired runtime bound.
\end{proof}

\begin{proof}[Proof of Theorem~\ref{thm:pleth}]
First, suppose that $k$ and $d$ are fixed. Then $k!$ and $(2k)^{d-1}$ and constants and  Proposition~\ref{prop:pleth_gen} gives that $a^\la_{d,m}$ can be computed in time $O((dm)^{dk}\log(dk)) = O(n^{dk}\log(d))$, which is of polynomial time. 

For the second part, Proposition~\ref{prop:fixed_aft2} gives us that ${\rm aft}(\la) \leq 4k^2$. Setting $K=4k^2$ we can then invoke Proposition~\ref{prop:pleth_aft}. 
\end{proof}

\section{Additional remarks}

\subsection{} We include the following, slightly edited, remark from Vojtech  Havl{\'\i}{\v{c}}ek (personal communication) on the implications of this work to quantum computing:

``The results in this paper have the following implications for the quantum algorithms proposed by Larocca and Havlicek in~\cite{LH24}. A superpolynomial quantum advantage means that a quantum algorithm runs in polynomial time while the best classical algorithm for the problem is expected to have a runtime that scales superpolynomially with the input size.

\begin{enumerate}
\item Unless Conjecture~\ref{conj:lr} (matching Hypothesis 1 in~\cite{LH24}) or the stronger Conjecture~\ref{conj:lr2} about the existence of a classical (possibly randomized) algorithm for computing Littlewood--Richardson coefficients is proved, there seems to be a narrow input regime in which their proposed quantum algorithm provides a super-polynomial quantum speedup. The set of inputs for which this is possible is however severely limited by Proposition~\ref{prop:LR_compute} and other regimes considered in Sections~\ref{sec:dimension},~\ref{sec:kostka}.

\item Theorem~\ref{thm:kron1} refutes Conjecture 2 in \cite{LH24} about the possibility of superpolynomial quantum speedups when computing Kronecker coefficients. It does not quite show that the quantum algorithm proposed in \cite{LH24}(and \cite{bravyi2024quantum}) has a classical polynomial time algorithm as that algorithm runs in time $O(f^\mu f^\nu/ f^\lambda \poly(n))$ and it is possible that it runs in polynomial time whenever $f^\lambda$, $f^\mu$ and $f^\nu$ scale super-polynomially and yet their ratio remains polynomial. However, there does not seem to be a nicely parametrized set of such partitions. Here it is posed  as Question~\ref{q:triples} and conjecture that there is classical algorithm with runtime $O(f^\mu f^\nu/ f^\lambda \poly(n))$. 

\item Theorem~\ref{thm:pleth} limits the set of inputs for which the plethysm coefficient quantum algorithm in \cite{LH24} gives a superpolynomial speedup. If Question~\ref{q:pleth} is resolved in affirmative, the algorithm from \cite{LH24} does not provide a superpolynomial quantum advantage. ''

\end{enumerate}

\subsection{} The careful reader would notice that we have provided a large class of triples $(\la,\mu,\nu)$ for which $g(\la,\mu,\nu)$ can be computed in polynomial time, so for these families of inputs we have $\textsc{ComputeKron} \in \FP \subset \SP$. In these cases the Kronecker coefficient is equal to the number of some objects, each of which can be computed (and thus verified) in polynomial time. This implies that for those cases the Kronecker coefficients have an efficient combinatorial interpretation. This interpretation can be derived from the algorithm, but it would not be very insightful and would not classify as nice. In fact, by~\cite{CDW,PPcomp} we already knew that $\textsc{ComputeKron} \in \FP \subset \SP$ whenever the number of rows $\ell(\la),\ell(\mu),\ell(\nu)$ are constant. Yet research into combinatorial interpretations for 2-, 3-,4- row partitions has continued past the earlier classical results in~\cite{rosas2001kronecker}, see~\cite{BMS,mishna2021vector,mishna2022estimating}. When one partition has two rows, but the others are not bounded, in certain cases we have criteria, see~\cite{ballantine2005kronecker} and~\cite{PPu}. 

\subsection{} There are even less available formulas or other results for the plethysm coefficients, see~\cite{colmenarejo2022mystery} for an overview and~\cite{orellana2022plethysm} for another representation theoretic interpretation. Special cases for  $a^\la_{d,m}$ are being considered when $d=2,3,4$ as in~\cite{orellana2024quasi}. Earlier work on the $a^\la_{d,m}$ in relation to Geometric Complexity theory was done in~\cite{IP17} and special cases of three-row plethysm coefficients were derived in~\cite{DIP}. In~\cite{DIP} we first used the approach described in Section~\ref{sec:pleth}, which was later used to show that computing plethysm coefficients is in $\GapP$ in~\cite{FI20}. The cases subject to Theorem~\ref{thm:pleth} give families when computing the plethysm coefficient is in $\FP \subset \SP$, and so should have a positive combinatorial interpretation. 

\subsection{} There is a curious parallel between Kronecker and plethysm coefficients. Even though they do not live in the same ``space'' informally speaking, they exhibit similar behavior and computational hardness. Using a quite indirect approach via Geometric Complexity Theory, Ikenmeyer and the author observed in~\cite{IP17} that in a certain stable limits we have $g(\la, (m^d), (m^d)) \geq a^\la_{d,m}$. It is also easy to see that $g(\la, (m^d),(m^d) ) = a^\la_{d,m}$ when $\la$ is a two-row partition. These values are actually the difference between successive numbers of partitions inside a rectangle, whose combinatorial proof of positivity by Kathy O'Hara~\cite{o1990unimodality} leads to a combinatorial interpretation (as observed in~\cite{PPu} and explicitly stated in~\cite{panova2023computational}). In the opposite direction, the multiplicity approach could lead to the desired symmetric chain decomposition as done in~\cite{orellana2024quasi} for $d\leq 4$. 

\subsection{} The LR and Kronecker coefficients are polynomially bounded in size whenever the three partitions involved have fixed Durfee square size (diagonal length) as shown in~\cite{PP22}. Trying to modify these proofs to compute the exact value runs into problems, in particular some exponentially large alternating sums. Even in the simplest case when $\nu =(n-a,1^a)$ is a hook, $d(\la),d(\mu) \leq k$, it is not clear how to efficiently compute $g(\la,\mu,\nu)$. The combinatorial interpretation of Blasiak-Liu \cite{Bla,BL,Liu} requires constructing exponentially many tableaux. We expect that there will still be a poly-time algorithm in this case.

\subsection{} While in this paper we were concerned with unary input (i.e. input size is equal to $n =|\lambda|$), in some cases binary input is also relevant. Binary input for partitions means that  we write the part sizes $\la_1,\la_2,\cdots$ in binary, so that the input size becomes $\log(\la_1) + \log(\la_2)+\cdots = O(\ell(\la) \log(\la_1) )$. With binary input and $\ell(\la),\ell(\mu),\ell(\nu)$ -- constants, we still have that $g(\la,\mu,\nu)$ can be computed in polynomial time, see~\cite{CDW,PPcomp}. It is thus natural to ask\footnote{as conjectured by M. Christandl, M. Walter, Personal Communication, 2023} whether the Kronecker coefficients's positivity is in $\QMA$ when the input is in binary (and there is no further restriction on lengths). 

\subsection{}\label{sec:rem_kron}
For most partitions the ratio $\frac{f^\la f^\mu}{f^\nu} = O(\sqrt{n!})$, since most partitions are close to the Plancherel shape. It is not hard to see that the Kronecker coefficients can be computed in time $\exp{O(n)}$, for example by using the character formula
$$g(\la,\mu,\nu) = \sum_{\alpha \vdash n} \frac{1}{z_\alpha} \chi^\la(\al) \chi^\mu(\al) \chi^\nu(\al),$$
where $\chi^\la(\al)$ are the irreducible characters of $S_n$ evaluated at a permutation of cycle type $\al$ and $z_\al$ is the size of the centralizer.
The character tables themselves can be computed via branching rules in time $\exp O(\sqrt{n})$, this approach is outlined in~\cite{pak2024signed}. 

The runtime bound also holds when the lengths of $\la,\mu,\nu$ are fixed since then the Kronecker coefficients are computable in $\poly(n)$ time. Thus the important cases are when $f^\la f^\mu /f^\nu = O(\poly(n))$, but the partitions themselves have large lengths. As we  saw in the examples in Section~\ref{sec:dimension} there are some interesting such cases. As a particular benchmark example we challenge the reader with considering $\la =(n-k,k)$ a two-row partition and $\mu$ and $\nu$ close to rectangular shapes in  a regime when $\frac{f^\mu f^\la}{f^\nu} = \Theta (\poly(n))$, see Example~\ref{ex:rect}.

\subsection{}\label{sec:rem_pleth}
Similar to the discussion above, plethysm coefficients can be computed in exponential time (using characters) and the challenge lies in finding polynomial time algorithms. 
While we were concerned only with $a^\la_{d,m}$, the same approach can be used to compute $a^\la_{\mu,\nu}$ in polynomial time when:
$|\mu|$ is fixed and $\ell(\la)$ is fixed (see~\cite{kahle2016plethysm}), or ${\rm aft}(\la)$ is fixed and $\mu$ and $\nu$ are arbitrary. In the second case we would expand $s_\nu$ via monomial quasisymmetric functions. The crux is to realize that only a small number of instances would contribute to $s_\la$ when ${\rm aft}(\la)$ is small. 

Contrary to that, if the inner partition size $m=|\nu|$ is fixed, even for $m=3$, the problem becomes $\SP$-hard~\cite{FI20}, i.e. no polynomial time algorithm would exist in general (assuming $\P \neq \NP$).

\subsection{}
It is possible to obtain all the coefficients in the expansion in terms of Schur functions (or other bases) ``at once'' as described in~\cite{barvinok1997sparse}. In our context it would mean, for example, obtaining all $g(\la,\mu,\nu)$ for $\la$ fixed and $\mu,\nu$ varying. However, the runtime of such an algorithm would be polynomial in the entire data (including the sizes of the coefficients) and for most cases this would mean superexponential in $n$.



\newcommand{\etalchar}[1]{$^{#1}$}

\end{document}